\newfont{\ninept}{ptmr at 9pt} 
\DeclareMathOperator{\tr}{tr}
\newtheorem{Thm}{Theorem}
\newtheorem{Propn}[Thm]{Proposition}
\newcommand{\Amap}{\bs {\sf A}}
\newcommand{\scp}[2]{\langle #1,\, #2 \rangle}
\newcommand{\inv}[1]{\frac{1}{#1}}
\newcommand{\sinv}[1]{\text{\raisebox{.5mm}{\scalebox{.7}{$\frac{1}{#1}$}}}}
\newcommand{\supp}{{\rm supp}\,}
\newcommand{\tinv}[1]{{\textstyle\frac{1}{#1}}}
\newcommand{\ud}{\mathrm{d}} 
\newcommand{\bs}{\boldsymbol}
\newcommand{\bb}{\mathbb}
\newcommand{\cl}{\mathcal}
\newcommand{\whp}{\mbox{w.h.p.\@}\xspace}
\newcommand{\ts}{\textstyle}
\newcommand{\ie}{\emph{i.e.}, }
\newcommand{\eg}{\emph{e.g.}, }
\newcommand{\sq}{\vspace{-1.5mm}}
\newcommand{\hsq}{\vspace{-0.5mm}}
\newcommand{\im}{{\sf i}\mkern1mu} 
\newcommand{\iid}{%
    \ifmmode
        \mathrm{i.i.d.}%
    \else%
        i.i.d.\@\xspace%
    \fi%
}
\newcommand{\myp}[1]{\noindent\textbf{#1} }
\DeclareMathOperator{\diag}{diag}
\title{1-bit Localization Scheme for Radar using \\Dithered Quantized Compressed Sensing\vspace{0mm}}
\author{Thomas Feuillen\footnote{ TF and LV are with CoSyGroup, ICTEAM/ELEN, Universit\'e catholique de Louvain.}, Chunlei Xu$^\dag$, Luc Vandendorpe$^*$, and
    Laurent Jacques\footnote{CX and LJ are with ISP Group, ICTEAM/ELEN, Universit\'e catholique de Louvain, funded by the F.R.S.-FNRS. Part of this study is funded by the project {\sc AlterSense} (MIS-FNRS). E-mail: \url{{thomas.feuillen,chunlei.xu,luc.vandendorpe,laurent.jacques}@uclouvain.be}.}}
\begin{document}
\maketitle

\begin{abstract}
We present a novel scheme allowing for 2D target localization using highly quantized 1-bit measurements from a \textit{Frequency Modulated Continuous Wave} (FMCW) radar with two receiving antennas. Quantization of radar signals introduces localization artifacts, we remove this limitation by inserting a \emph{dithering} on the unquantized observations. We then adapt the projected back projection algorithm to estimate both the range and angle of targets from the dithered quantized radar observations, with provably decaying reconstruction error when the number of observations increases. Simulations are performed to highlight the accuracy of the dithered scheme in noiseless conditions when compared to the non-dithered and full 32-bit resolution under severe bit-rate reduction. Finally, measurements are performed using a radar sensor to demonstrate the effectiveness and performances of the proposed quantized dithered scheme in real conditions.
\end{abstract}


\section{Introduction}
Compressive sensing aims at compressively and non-adaptively sampling structured signals, \eg sparse or compressible signals in an appropriate basis, by correlating them with a few random patterns, \ie much less numerous than the ambient signal dimension~\cite{candes2006stable}. The compressively observed signal is then estimated from non-linear algorithms such as basis pursuit denoise (BPDN) \cite{CT2005}, iterative hard thresholding (IHT) \cite{BD2009}, or CoSaMP \cite{NT2009}. 

In radar processing, CS offers the potential to simplify the acquisition process \cite{bara} or to use super resolution algorithms to solve ambiguous estimation problems \cite{strohmer}. However, the underlying assumption of such schemes is the availability of high resolution radar signals, requiring high bit-rate data transmission to a processing unit. 

In this article, we aim to break this assumption and to further explore the reconstruction of the target scene on the basis of radar signals acquired under harsh bit-rate acquisition process, \ie a regime where classic estimation methods fail (\eg Maximum Likelihood~\cite{mkay}). Bit-rate reduction in radar applications indeed opens new study directions, \eg through the use of 1-bit comparators to design cost-efficient acquisition hardware, or the use of several radar sensors run in parallel with fixed data-rate, as in \textit{Internet of Things} (IOT) applications relying on massive collection of sensors. Moreover, this loss of resolution can be counteracted by increasing the number of observations, provided that new algorithms be designed for this context. 
 
We propose to reconstruct the target scene in the extreme 1-bit measurement regime, in a similar way to only recording the sign of each sample~\cite{BB2008, JLBB2013, PV2013}. To the best of our knowledge, our paper is one of a few initial works addressing the case of 1-bit FMCW radar processing. Comparing with the existing literature on 1-bit quantization of ``IQ'' signals for different radar applications \cite{SARMAP,varthres}, our main contributions lie in the following aspects. First, we show that estimating the 2D-localization of multiple targets observed from a radar system with two antennas under the harsh bit-rate requirement is feasible. This problem amounts to estimating a sparse signal, whose support and phases encode the target ranges and angles, from a quantized CS (QCS) model. In particular, we explore the estimation in an extreme bit-rate scenario where every measurement takes a single bit achieved by a uniform scalar quantization combined with a random dithering vector~\cite{DJR2017,XJ2018, XSJ2018}. 

Second, we provide theoretical guarantees on the estimation error of multiple targets localization using the \textit{projected back projection} (PBP) algorithm \cite{BFNPW2017, XJ2018, XSJ2018}. This is achieved by promoting in PBP a joint support between the range profiles observed by the two antennas. In particular, we show that the estimation error decays when the number of quantized observations increases. We further reveal, through Monte Carlo simulations, a certain trade-off between the number of measurements and the total bit-rate by comparing the performances of PBP under multiple scenarios involving one or two targets and different measurement numbers and resolutions. The importance of the dithering process is also stressed by the existence of strong artifacts in the 2D-localization of targets when this dithering is not added. Finally, we demonstrate our method in real experiments, locating corner reflectors in an anechoic chamber. In this context, we show that the random dithering still improves the localization of targets provided this dithering is adapted to the signal noise.

The rest of this paper is structured as follows. The radar signal model and the simplified linear model are introduced in Sec.~\ref{sec:Radar-system}. The quantized radar observation model, the adaptation of the PBP algorithm to the 2D-localization of multiple targets and the theoretical analysis of its reconstruction error are provided in Sec.~\ref{sec: QCS-model}. In Sec.~\ref{sec:num-res}, the proposed scheme is tested under different scenarios using Monte Carlo simulations. Finally, we report the use of our framework in a real experiment in Sec. \ref{sec:real-experiment} before to conclude. 

\myp{Notations:} Vectors and matrices are denoted with bold symbols. The imaginary unit is $\im = \sqrt{-1}$, $\angle (r e^{\im \alpha}) = \alpha$, $\bb Z_\delta:=\delta \bb Z + \delta/2$, $[D] := \{1, \cdots, D\}$ for $D \in \bb N$, ${\rm \bf Id}$ is the identity matrix, $\supp \bs u = \{i : u_i \neq 0\}$ is the support of $\bs u$, $\lfloor \cdot \rfloor$ is the flooring operator, and $|\cl S|$ is the cardinality of a set $\cl S$. For any complex quantity $B$, \eg a scalar, a vector or a matrix, $B_{\rm R} = \Re(B)$ and $B_{\rm I} = \Im(B)$ are the real and imaginary parts of $B$, respectively. For any $\bs B \in \bb C^{D \times D'}$ (or $\bs u \in \bb C^{D'}$), $\bs B_{\cl S}$ (resp. $\bs u_{\cl S}$) is the cropped matrix (resp. vector) obtained by restricting the columns (resp. components) of $\bs B$ (resp. $\bs u$) to those indexed in $\cl S \subset [D']$, $\bs B_{:,j}$ (or $\bs B_{j,:}$) are the $j^{\rm th}$ column (resp. row) of $\bs B$. The $\ell_2$-norm of vectors is $\|\bs u\| = (\bs u^* \bs u)^{1/2}$, while the Frobenius norm and scalar product of matrices are related by $\|\bs B\|_F = (\scp{\bs B}{ \bs B}_F)^{1/2} = (\tr \bs B^* \bs B)^{1/2}$. The $\ell_2$ (or Frobenius) unit ball in $\bb R^N$ (resp. $\bb R^{D \times D'}$) is denoted by $\bb B^N$ (resp. $\bb B^{D \times D'}_F \simeq \bb B^{DD'}$). The uniform distribution over $[-\frac{\delta}{2},\frac{\delta}{2}]$ is denoted $\cl U^{\bb R}_\delta$, and its complex counterpart is $\cl U^{\bb C}_\delta := \cl U^{\bb R}_\delta + \im \cl U^{\bb R}_\delta$.

\section{Radar System Model}
\label{sec:Radar-system}

In this section, we show how the 2D target localization information is \emph{linearly} encoded in the signals recorded by a radar system involving two antennas illustrated in Fig.~\ref{ULA}(left). 

\vspace{2mm}
\myp{Transmitted signal model:} The signal transmitted from an antenna located on the origin $(0,0)$ reads\hsq
\begin{equation*}
 \ts   s(t)=\sqrt{P_t}\,\exp\big(2\pi \im ( \int_{0}^t f_c(\xi) \ud\xi +\phi_0)\big),\hsq
\end{equation*}
where $P_t$ is the transmitted power, $f_c(t)$ is the transmitted frequency pattern, and $\phi_0$ is the initial phase of the oscillator. In this model, the frequency pattern $f_c(t)$ of a FMCW radar is characterized as a periodic saw-tooth function according to\hsq 
\begin{equation*}
 \ts   f_c(t)=f_0+ B\,(\frac{t}{T} \bmod 1),\hsq
\end{equation*}
where $f_0$ is the central frequency, $\bmod$ is the modulo operation, $T$ is the saw-tooth period, and $B$ is bandwidth spanned by the radar, \ie $f_c(t) \in [f_0, f_0 + B]$.

\vspace{2mm}
\myp{Digital Beam-forming reception model:} Let us first consider one static target located at range $R>0$ and angle ${\theta \in [-\pi,\pi]}$ from a receiving linear array comprised of two receiving antennas $\cl A_1$ and $\cl A_2$, located in $(0,0)$ and $(0, d)$, respectively (see Fig.~\ref{ULA}).
\begin{figure}[!t]
\centering
\includegraphics[width=0.45\columnwidth, height=5cm]{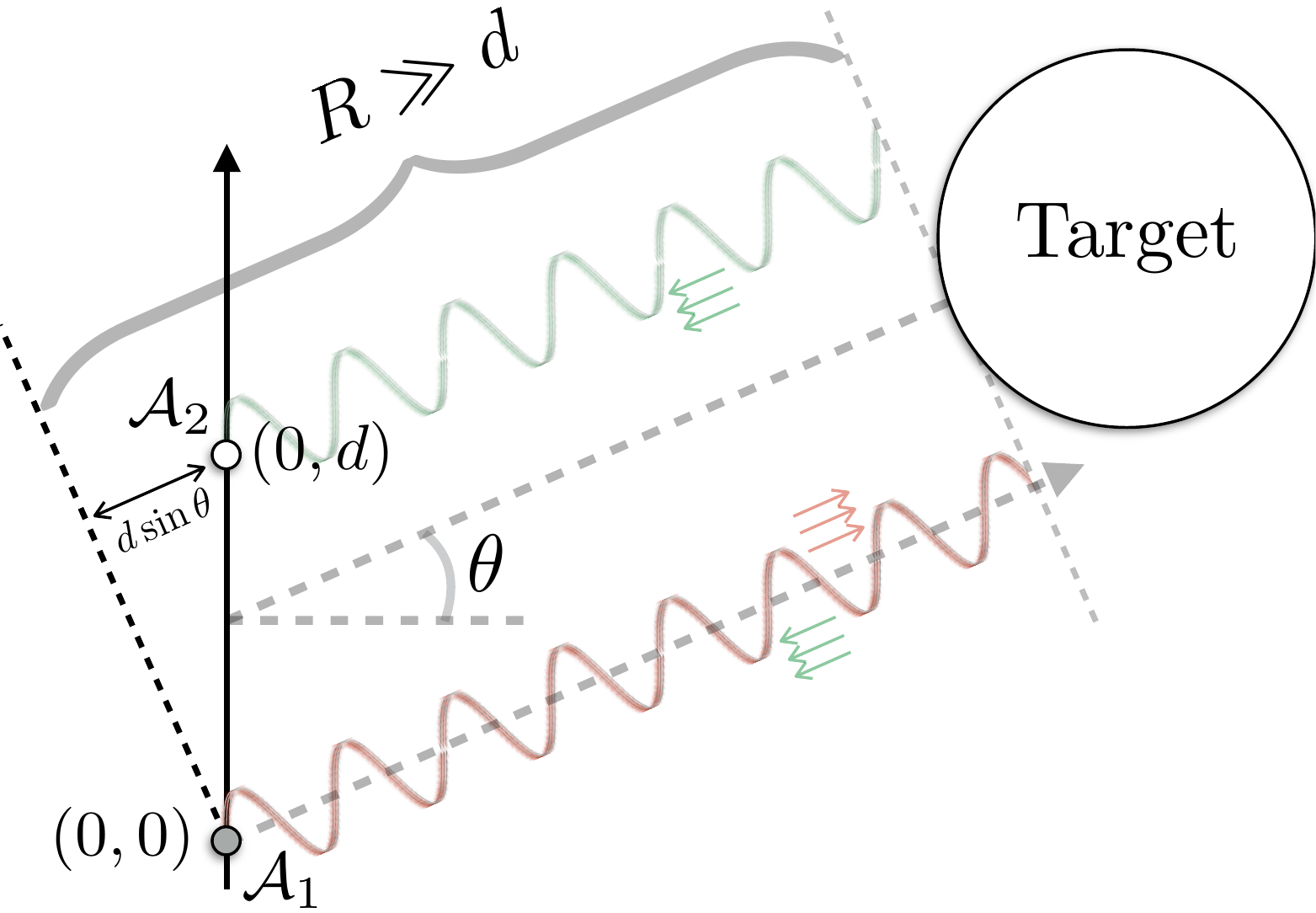} \hspace{1mm}
\quad\includegraphics[width=7cm, height=5cm]{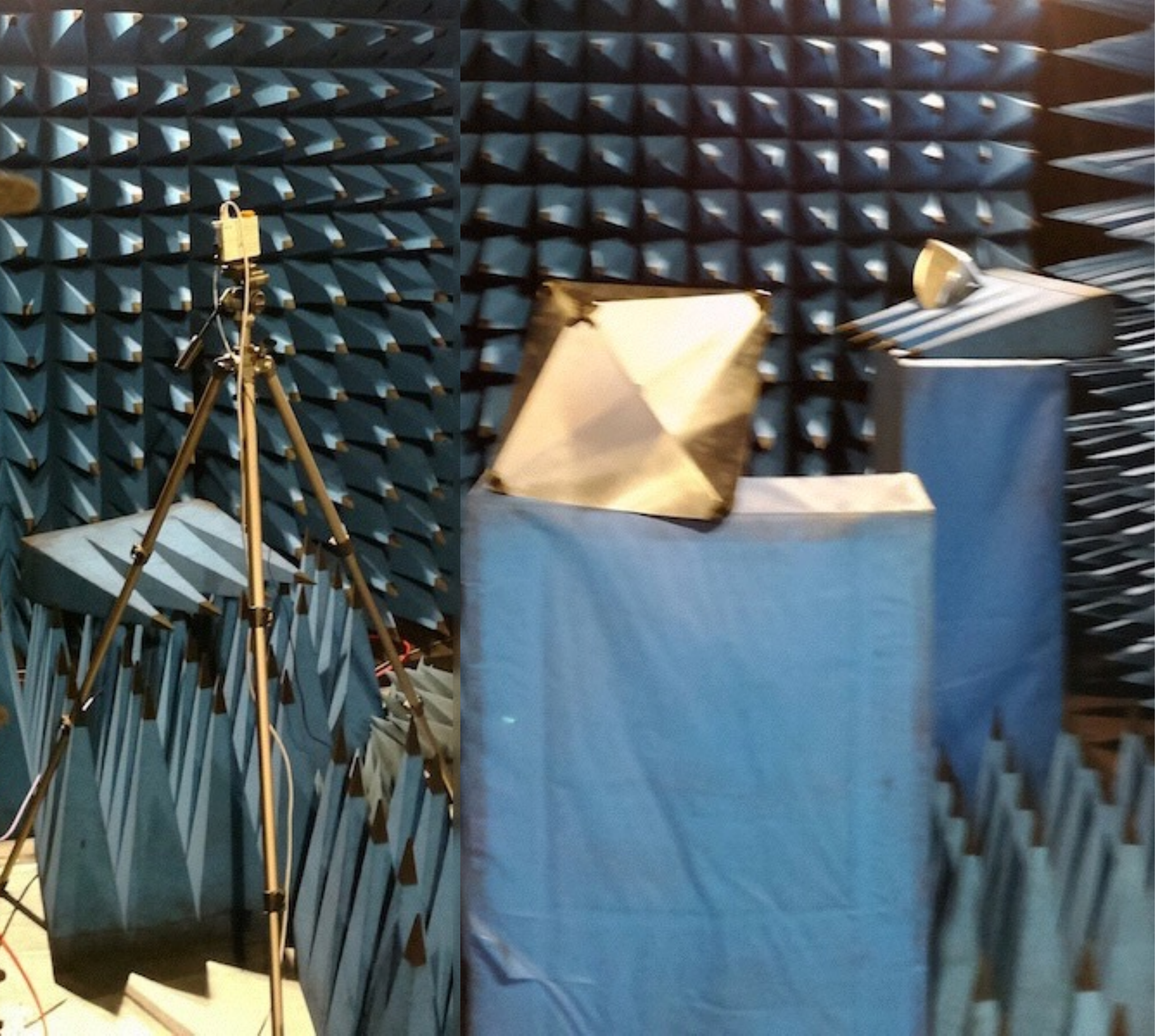}
\caption{\ninept (left) Illustration of the two antennas radar systeman array of receiving antennas. Experiment set-up with a FMCW radar on the left and two corner reflectors on the right (cropped image).\sq\sq\sq}
\label{ULA}
\end{figure}
The signal received on $\cl A_p$ ($p \in \{1, 2\}$) is:\hsq 
\begin{equation*}
    \gamma_p(t) = A\,s(t-\tau_p),\hsq
\end{equation*}
where $A$ is the complex received amplitude coefficient that depends on several parameters such as the range $R$ and the \textit{Radar Cross-Section} (RCS). Under the far-field approximation, the delay $\tau_p=c^{-1}( 2 R + p\,d\,\sin \theta)$ is the round-trip time between the transmitting antenna in $(0,0)$, the target and $\cl A_p$ ($p \in \{1,2\}$), with $c$ the speed of light.

After coherent demodulation of the FMCW radar signals, the acquisition model links the sampling time with the frequency that is being transmitted. The sampled signal can be seen as a measurement of the phase-shift at time $t$ of the transmitted frequency $f_c(t)$ depending on the target position. For a regular sampling at rate $M/T$, the sampled frequencies are $f_m := f_c(m \frac{T}{M})$, $1\leq m \leq M$, so that, at the $m^{\rm th}$ frequency,  $\cl A_p$ receives the signal:\hsq
\begin{align}
    \label{approxf0}
   \ts \Gamma_{mp}&= \ts x\, e^{-\im 2 \pi f_m \tau_p} \approx x\, e^{-\im 2 \pi f_m  \frac{2 R}{c}} e^{-\im 2 \pi f_0\frac{p d \sin \theta}{c}},\hsq
\end{align}
where $x$ is the received amplitude after the coherent demodulation. The approximation in \eqref{approxf0} is reasonable for K-band radars where $B \ll f_0$, \ie ${B=250{\rm MHz}}$ and ${f_0=24{\rm GHz}}$ respectively.

Comparing \eqref{approxf0} for $\cl A_1$ and $\cl A_2$ shows that the angle of arrival acts as a complex gain on $\cl A_2$. Furthermore, this paper considers a multi-target scenario using a purely additive model. This means that all the targets are in a direct line of sight from the radar, \ie there is no multi-path. For a scene with $K$ targets, recasting \eqref{approxf0} into a linear matrix sensing model and taking advantages of the phase relation between $\cl A_1$ and $\cl A_2$, the sensed signals $\bs \Gamma = \{\Gamma_{mp}\}_{mp} \in \bb C^{M \times 2}$ are\hsq
\begin{equation}
\label{eq:setting-signal}
\ts \bs \Gamma = [\bs \gamma_1, \bs \gamma_2] = \bs \Phi\, [\bs x, \bs G  \bs x],\hsq
\end{equation}
where $\bs x \in \mathbb{C}^{N}$ encodes the range profile, \ie $x_n \neq 0$ if there exists a target at range $R_n$, ${\|\bs x\|_0 := |\supp \bs x| \le K \ll N}$, ${\bs \Phi = \{ e^{-\im \frac{4\pi}{c} f_m R_n }\}_{mn} \in \bb C ^{M\times N}}$ is the \emph{range} measurement matrix, $\bs G =\diag(e^{\im \frac{2 \pi}{c} f_0 d \sin \theta_1},\,\cdots, e^{\im \frac{2 \pi}{c} f_0 d \sin \theta_N})$ with $\supp (\bs \theta)= \supp (\bs x)$, \ie $\bs G$ is the phase difference between the first and second receiving antennas. Therefore, the 2D-localization problem is tantamount to estimating the support $T$ of $\bs x$ from the sensing model \eqref{eq:setting-signal}, hence extracting the target ranges according to the discretization $\{R_n\}$. Comparing the phases of $\bs x_1$ and $\bs x_2$ on the index set $T$ then allows to deduce the angles $\bs \theta_T$. Interestingly, in this process, only the target ranges are discretized, \ie the angles are estimated from continuous phase differences. This, however, comes at a cost as this simplified two-antenna model does not allow the recovery of multiple targets located on the same range $R_n$.

Note that, in the absence of quantization, inverting \eqref{eq:setting-signal} can be solved using Maximum Likelihood \cite{mkay} if $M \geq N$, or using CS algorithms (\eg IHT \cite{BD2009} or CoSaMP \cite{NT2009}) if $M \leq N$. 

\section{Quantizing Radar observations}
\label{sec: QCS-model}

In this work, we propose to quantize the observations $\bs \Gamma$ achieved in the digital beamforming model \eqref{eq:setting-signal}. Our quantization procedure relies on a uniform scalar quantizer $\lambda \in \bb R \mapsto \cl Q(\lambda) = \delta \lfloor \frac{\lambda}{\delta}\rfloor + \frac{\delta}{2} \in \bb Z_\delta := \delta \bb Z + \frac{\delta}{2}$, with quantization width $\delta > 0$ applied entrywise onto vector or matrices, and separately onto the real and imaginary parts if these objects are complex.  

Our global objective is thus to estimate the localization of targets, as encoded in the matrix 
$\bs X = (\bs x_1, \bs x_2) = (\bs x, \bs G \bs x) \in  \bb C^{N\times 2}$, from the quantized observation model\hsq
\begin{equation}
  \label{eq:QRadar-system}
\bs Z = \Amap(\bs X) := \big(\Amap(\bs x_1), \Amap(\bs x_2)\big),\hsq
\end{equation}
with $\bs u \in \bb C^N \mapsto \Amap(\bs u) = \cl Q(\bs \Phi \bs u + \bs \xi) \in \bb Z^M_\delta + \im \bb Z^M_{\delta}$.  In $\Amap$, a uniform random \textit{dithering} $\bs \xi \in \bb C^{M}$, \ie $\xi_{m}  \sim_{\iid} \cl U^{\bb C}_\delta$ for all $m \in [M]$, is added to the quantizer input. For real sensing models, such a dithering attenuates the impact of the quantizer on the estimation of sparse/compressible signals in quantized CS \cite{PB2012,JC2017,XJ2018}. As will be clearer below, $\bs \xi$ also enables accurate estimation of $\bs X$.

As written in \eqref{eq:QRadar-system}, we can identify a low-complexity model for $\bs X$ in the case where only $K$ targets, with distinct ranges, are observed. We quickly see that\hsq 
\begin{equation*}
\ts  \bs X \in \Theta^K := \bigcup_{T \subset [N], |T|\leq K} \Theta_T,\hsq
\end{equation*}
with $\Theta_T := \{\bs U \in \bb C^{N \times 2}: \supp(\bs U_{:,1}) = \supp(\bs U_{:,2}) \subset T\}$, which is a union of ${N \choose K}$ $K$-dimensional subspaces. Note that, according to \eqref{eq:setting-signal}, since $|G_{jj}|=1$ we could further impose $|X_{j1}| = |X_{j2}|$ for all $j\in [N]$. However, this leads to an hardly integrable non-convex constraint on the domain of $\bs X$.  

In summary, the acquisition model \eqref{eq:QRadar-system} can thus be seen as a quantized embedding of $\Theta^K \ni \bs X$ into $\bs{\sf A}(\Theta^K) \subset \bb Z_\delta^M + \im \bb Z_\delta^M$ via the quantized mapping $\Amap$ \cite{XJ2018}. In addition, the model~\eqref{eq:QRadar-system} leads to 1-bit measurements with values $\{\pm\delta/2\}$ when $\delta$ is sufficiently large, \ie $\delta >  2 \max(\|\bs \Phi \bs x_1\|_\infty, \|\bs \Phi \bs x_2\|_\infty)$.

\section{2D Target Localization in Quantized Radar}
\label{sec:2d-targ-local-qradar}

Despite the quantization, the sensing model \eqref{eq:QRadar-system} still enables target localization. We adopt here a simple method, the projected back projection (PBP) proposed in \cite{XJ2018}, for which the estimation error provably decays when the number of observations $M$ increases. The PBP estimate is defined from
\begin{equation}
  \label{eq:PBP}
  \hat{\bs X} = \cl P_{\Theta^K}(\tinv{M} \bs \Phi^* \bs Z), 
\end{equation}
with the projector $\cl P_{\Theta^K}(\bs U) := \arg\min_{\bs V \in \Theta^K} \|\bs U - \bs V\|^2_F$. In words, the estimate $\hat{\bs X}$ is achieved by first \emph{back projecting} $\bs Z$ in the signal domain thanks to the adjoint sensing $\bs \Phi^*$, and then taking the closest point in $\Theta^K$ to $\tinv{M} \bs \Phi^* \bs Z$ from the projector~$\cl P_{\Theta^K}$. 

Interestingly, for any $\bs U \in \bb C^{N \times 2}$, $\bs V = \cl P_{\Theta^K}(\bs U)$ is easily computed.  Denoting by $\cl H_K(\bs v)$ the hard thresholding operator setting all but the $K$ largest components (in magnitude) of $\bs v \in \bs C^N$ to zero, we first form $T = \supp \cl H_K(\bs u)$ with $\bs u = \cl S(\bs U) \in \bb R^N_+$ and $\cl S(\bs U)_n = (|U_{n,1}|^2 + |U_{n,2}|^2)^{1/2}$ for all $n\in [N]$, and then, for $p \in \{1,2\}$, $V_{np}$ equals to $U_{np}$ if $n \in T$, and to 0 otherwise.  This provides clearly $\|\cl P_{\Theta^K}(\bs U) - \bs U\|^2_F = \|\cl H_K(\cl S(\bs U)) - \cl S(\bs U)\|^2 \leq \|\tilde{\bs u} - \cl S(\bs U)\|^2$ for any $\tilde{\bs u} \in \bb R^n_+$ such that $|\supp \tilde{\bs u}|\leq K$.  Since for any $\bs U' \in \Theta_K$, $\tilde{\bs u}:=\cl S(\bs U') \in \bb R^n_+$, we thus have $\|\cl P_{\Theta^K}(\bs U) - \bs U\|^2_F \leq \|\cl S(\bs U') - \cl S(\bs U)\|^2 = \|\bs U' - \bs U\|_F^2$ as required from the definition of $\cl P_{\Theta^K}$.

Given the estimate $\hat{\bs X}$ in \eqref{eq:PBP}, the range profile $\hat{T}$ is simply obtained as $\hat{T} = \supp\big(\cl S(\hat{\bs X})\big)$, so that targets are localized in the polar coordinates $(R_n,\theta_n)$ for all $n \in \hat T$ with $\theta_n = \arcsin\big(\frac{c}{2\pi f_0 d} \angle(\hat x_2[n]^* \hat x_1[n])\big)$. 

We now establish how the estimation error $\|\hat{\bs X} - \bs X\|_F$ of~\eqref{eq:PBP} can be bounded with high probability. This is important to ensure the quality of the estimated target coordinates. To this end, given $0<\epsilon<1$, we first assume that our radar sensing matrix $\sinv{\sqrt{M}}\bs \Phi$ respects the restricted isometry property over the set $\bar{\Sigma}^N_K :=\{\bs u \in \bb C^N: |\supp \bs u| \leq K\}$ of complex $K$-sparse signals, in short $\sinv{\sqrt{M}}\bs \Phi \in$ RIP$(\bar{\Sigma}^N_K,\epsilon)$, \ie for all $\bs u \in \bar{\Sigma}^N_K$, 
$$
\ts (1 - \epsilon) \|\bs u\|^2 \leq \inv{M} \|\bs \Phi \bs u\|^2 \leq (1 + \epsilon) \|\bs u\|^2.
$$

Many random matrix constructions have been proved to respect the RIP with high probability (\whp\footnote{``\whp'' means with probability exceeding $1-Ce^{-c\epsilon^2M}$ for $C,c>0$.}) \cite{BDDW08,FR2013,candes2006stable}. Given the discrete Fourier matrix $\bs F\in \bb C^{N \times N}$, the selection matrix $\bs R_\Omega$ such that $\bs R_\Omega \bs u = \bs u_{\Omega}$, and provided $M \gtrsim \epsilon^{-2} K\, (\ln K)^2 \ln N$, if $\Omega \subset [N]$ has cardinality $M$ and is picked uniformly at random among the ${N \choose M}$ $M$-length subsets of $[N]$, then $\bs \Phi =\sqrt{N} \bs R_\Omega \bs F$ respects \whp the RIP$(\bar{\Sigma}^N_K,\epsilon)$ \cite{FR2013,R2010}. Therefore, up to a random sub-sampling of the $M$ frequencies $\{f_m\}$, the radar sensing matrix $\bs \Phi$ follows a similar construction. 

Second, given ${\nu > 0}$, we assume that $\Amap$ satisfies the (complex) limited projection distortion over $\bar{\Sigma}^N_K$, or $\Amap \in$ LPD$(\bar{\Sigma}^N_K, \bs \Phi, \nu)$, \ie  
\begin{equation}
\label{eq:LPD}
 \ts \sinv{M} |\scp{\Amap(\bs w)}{\bs \Phi \bs v} - \scp{\bs \Phi \bs w}{\bs \Phi \bs v}| \leq \nu,\ \forall \bs w,\bs v \in \bar{\Sigma}^N \cap \bb B^{N}. 
 \end{equation}

Thanks to these two conditions, we get the following guarantee on $\hat{\bs X}$. 
\begin{Propn}
\label{prop:PBP-guarantees}
Given $\epsilon, \mu > 0$, if ${\sinv{\sqrt M}\bs \Phi \in {\rm RIP}(\bar{\Sigma}^N_{2K}, \epsilon)}$ and~${\bs A \in {\rm LPD}(\bar{\Sigma}^N_{2K}, \bs \Phi, \nu)}$, then, for all $\bs X \in \Theta_K$ the PBP estimate \eqref{eq:PBP} satisfies $\|\hat{\bs X} - \bs X\|_F \leq 2(\epsilon + 2\nu)$.     
\end{Propn}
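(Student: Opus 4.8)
The plan is to read \eqref{eq:PBP} as a near-optimal projection onto the union of subspaces $\Theta^K$ and to reduce the whole statement to controlling one Frobenius inner product, which I then split into a piece governed by the RIP and a piece governed by the LPD. Write $\bs U := \tinv M \bs \Phi^* \bs Z$, $\bs D := \hat{\bs X} - \bs X$, and let $\bs x_p, \bs d_p$ ($p \in \{1,2\}$) denote the columns of $\bs X$ and $\bs D$; as is implicit in the unit-ball domains of the two hypotheses, I normalize so that $\|\bs X\|_F \le 1$, whence $\|\bs x_p\| \le 1$. Since both $\bs X$ and $\hat{\bs X} = \cl P_{\Theta^K}(\bs U)$ lie in $\Theta^K$, their columns are supported on sets of size $\le K$, so each $\bs d_p$ is supported on a set of size $\le 2K$, i.e.\ $\bs d_p \in \bar{\Sigma}^N_{2K}$. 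This is exactly why the RIP and LPD are required at order $2K$, and it lets me feed $\bs x_p$ and $\bs d_p$ to both hypotheses. The starting point is the near-optimality of the projection: because $\bs X \in \Theta^K$ and $\hat{\bs X}$ minimizes $\|\bs U - \cdot\|_F$ over $\Theta^K$, we have $\|\bs U - \hat{\bs X}\|_F \le \|\bs U - \bs X\|_F$; expanding $\|\bs U - \hat{\bs X}\|_F^2 = \|(\bs U - \bs X) - \bs D\|_F^2$ and cancelling the common term yields the key inequality $\|\bs D\|_F^2 \le 2\,\Re\scp{\bs U - \bs X}{\bs D}_F$.

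Next I would expand the right-hand side column by column, $\scp{\bs U - \bs X}{\bs D}_F = \sum_{p} \big(\tinv M \scp{\Amap(\bs x_p)}{\bs \Phi \bs d_p} - \scp{\bs x_p}{\bs d_p}\big)$, using the adjoint identity $\scp{\bs \Phi^* \bs a}{\bs b} = \scp{\bs a}{\bs \Phi \bs b}$. The crux is then to insert the ``ideal'' correlation $\tinv M \scp{\bs \Phi \bs x_p}{\bs \Phi \bs d_p}$ by adding and subtracting it, which writes each summand as a quantization term $\tinv M\big(\scp{\Amap(\bs x_p)}{\bs \Phi \bs d_p} - \scp{\bs \Phi \bs x_p}{\bs \Phi \bs d_p}\big)$ plus an isometry-defect term $\tinv M \scp{\bs \Phi \bs x_p}{\bs \Phi \bs d_p} - \scp{\bs x_p}{\bs d_p}$. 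The first is exactly the object bounded in \eqref{eq:LPD}: it is \emph{linear} in its second argument $\bs d_p$ even though $\Amap$ is nonlinear, so after normalizing $\bs d_p$ and taking $\bs w = \bs x_p \in \bb B^N$ the LPD gives a bound $\nu \|\bs d_p\|$. The second is the classical RIP inner-product defect, which a real polarization identity on the joint support $\supp \bs x_p \cup \supp \bs d_p$ (of size $\le 2K$) bounds by $\epsilon \|\bs x_p\|\|\bs d_p\|$.

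I would then sum over the two columns and combine the two families of bounds slightly differently. For the RIP pieces, Cauchy--Schwarz over $p$ gives $\sum_p \epsilon \|\bs x_p\| \|\bs d_p\| \le \epsilon \|\bs X\|_F \|\bs D\|_F \le \epsilon \|\bs D\|_F$; for the LPD pieces, the elementary estimate $\|\bs d_1\| + \|\bs d_2\| \le 2\|\bs D\|_F$ gives $\sum_p \nu \|\bs d_p\| \le 2\nu \|\bs D\|_F$. Substituting into the key inequality produces $\|\bs D\|_F^2 \le 2(\epsilon + 2\nu)\|\bs D\|_F$, and dividing by $\|\bs D\|_F$ (the bound being trivial when $\bs D = 0$) gives $\|\hat{\bs X} - \bs X\|_F \le 2(\epsilon + 2\nu)$, as claimed.

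I expect the main obstacle to be the nonlinearity of $\Amap$: in contrast with the RIP term, the quantization term cannot be rescaled in its first argument, so the add/subtract split must be arranged so that the only homogeneity used is in the linear argument $\bs d_p$, and the first argument $\bs x_p$ must genuinely sit in $\bb B^N$ (hence the normalization). A secondary, purely technical point is the bookkeeping that keeps every vector fed to the RIP or the LPD inside $\bar{\Sigma}^N_{2K}$, together with the real polarization needed to pass from the real-valued RIP norm control to the real part of the complex inner-product defect.
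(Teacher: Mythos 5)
Your argument is correct and is essentially the paper's own route: the paper lifts the RIP and LPD columnwise to the matrix/Frobenius setting (picking up the factor $2\nu$) and then invokes the PBP analysis of \cite[Thm.~4.1]{XJ2018}, whose content — the projection near-optimality inequality $\|\hat{\bs X}-\bs X\|_F^2 \leq 2\Re\scp{\tinv{M}\bs \Phi^*\bs Z-\bs X}{\hat{\bs X}-\bs X}_F$ followed by the add/subtract split into an LPD term and a RIP inner-product-defect term — is exactly what you re-derive explicitly, with the same constants. Your explicit normalization $\|\bs X\|_F\leq 1$ is likewise implicit in the paper (inherited from the unit-ball domain of the LPD and of the cited theorem), so it is a faithful, self-contained unpacking rather than a different proof.
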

\begin{proof}
If $\sinv{\sqrt M} \bs \Phi \in {\rm RIP}(\bar{\Sigma}^N_{2K}, \epsilon)$, then $\sinv{\sqrt M} \bs \Phi \in {\rm RIP}(\Theta^{2K}, \epsilon)$ with respect to the Frobenius norm since $|\sinv{M}\|\bs \Phi \bs U\|_F^2 - \|\bs U\|_F^2| \leq |\sinv{M}\|\bs \Phi \bs u_1\|^2 - \|\bs u_1\|^2| + |\sinv{M}\|\bs \Phi \bs u_2\|^2 - \|\bs u_1\|^2| \leq \epsilon (\|\bs u_1\|^2 + \|\bs u_2\|^2) = \epsilon \|\bs U\|_F^2$,  for all $\bs U = (\bs u_1, \bs u_2) \in \Theta^{2K} \subset (\bar{\Sigma}^N_{2K}, \bar{\Sigma}^N_{2K})$.   
Moreover, extending the LPD \eqref{eq:LPD} to matrices with the Frobenius scalar product, if $\Amap \in {\rm LPD}(\bar{\Sigma}^N_{2K}, \bs \Phi, \nu)$, then the matrix map $\Amap \in {\rm LPD}(\Theta^{2K}, \bs \Phi, 2\nu)$
since $\scp{\Amap(\bs W)}{\bs \Phi \bs V}_F = \scp{\Amap(\bs w_1)}{\bs \Phi \bs v_1} + 
\scp{\Amap(\bs w_2)}{\bs \Phi \bs v_2}$ for any $\bs W = (\bs w_1,\bs w_2), \bs V = (\bs v_1, \bs v_2) \in \Theta^K$, and similarly for $\scp{\bs \Phi \bs W}{\bs \Phi \bs V}_F$. 
The rest of the proof is a quick extension of \cite[Thm. 4.1]{XJ2018} to complex $N\times 2$ matrices belonging to the union of low-dimensional subspaces $\Theta^K$.    
\end{proof}

The next proposition (proved in Appendix) determines when $\Amap$ respects the LPD, as required by Prop.~\ref{prop:PBP-guarantees}.
\begin{Propn}
\label{prop:lpd-existence}
Given $\epsilon > 0$, if $\sinv{\sqrt M}\bs \Phi \in {\rm RIP}(\bar{\Sigma}^N_{2K},\epsilon)$ and if 
$M \geq C \epsilon^{-2} K \ln(\frac{N}{K})\ln(1 + \frac{c}{\epsilon^3})$, then, \whp, $\Amap \in {\rm LPD}(\bar{\Sigma}^N_K, \bs \Phi, 4 \epsilon (1+\delta))$.   
\end{Propn}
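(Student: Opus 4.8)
The plan is to establish the limited projection distortion bound by controlling, uniformly over the sparse set $\bar\Sigma^N_K \cap \bb B^N$, the quantity $\sinv{M}|\scp{\Amap(\bs w)}{\bs\Phi\bs v} - \scp{\bs\Phi\bs w}{\bs\Phi\bs v}|$. First I would fix a single pair $(\bs w,\bs v)$ and understand the randomness coming from the dither. Writing $\Amap(\bs w) = \cl Q(\bs\Phi\bs w + \bs\xi)$, the key probabilistic fact is that uniform dithering makes the quantizer \emph{unbiased} in a precise sense: for a uniform scalar quantizer of width $\delta$ with a dither $\xi \sim \cl U_\delta^{\bb R}$, the expectation $\Ex_\xi[\cl Q(a + \xi)] = a$ for any input $a$ (applied separately to real and imaginary parts). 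Hence $\Ex_{\bs\xi}[\Amap(\bs w)] = \bs\Phi\bs w$, so that $\Ex_{\bs\xi} \scp{\Amap(\bs w)}{\bs\Phi\bs v} = \scp{\bs\Phi\bs w}{\bs\Phi\bs v}$ and the expression inside the absolute value has zero mean over the dither. The LPD is therefore a statement of uniform concentration of this mean-zero quantity around $0$.

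The second step is a concentration bound for fixed $(\bs w,\bs v)$. Since the per-measurement quantization error $\cl Q(a+\xi)-a$ is bounded (magnitude at most $\delta/2$ in each real/imaginary component) and the $\xi_m$ are independent across $m \in [M]$, I would write $\sinv{M}\scp{\Amap(\bs w)-\bs\Phi\bs w}{\bs\Phi\bs v}$ as an average of $M$ independent mean-zero terms, each bounded in terms of $\delta$ and $|(\bs\Phi\bs v)_m|$. Applying a Hoeffding/Bernstein-type inequality (treating real and imaginary parts separately) gives a tail of the form $\exp(-cM t^2 / (\delta^2 \|\bs\Phi\bs v\|_\infty^2 \cdot \textup{scale}))$; here the RIP on $\sinv{\sqrt M}\bs\Phi$ lets me control $\sinv{M}\|\bs\Phi\bs v\|^2 \le (1+\epsilon)\|\bs v\|^2 \le 1+\epsilon$ for $\bs v \in \bar\Sigma^N_K \cap \bb B^N$, which bounds the relevant variance proxy. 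This yields $\sinv{M}|\scp{\Amap(\bs w)}{\bs\Phi\bs v}-\scp{\bs\Phi\bs w}{\bs\Phi\bs v}| \le t$ with high probability for each fixed pair.

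The third and crucial step is to upgrade this pointwise bound to a \emph{uniform} bound over all $\bs w, \bs v \in \bar\Sigma^N_K \cap \bb B^N$, and this is the main obstacle. The difficulty is that $\Amap$ is a discontinuous (quantized) map, so a naive Lipschitz-based net argument fails: small perturbations of $\bs w$ can flip quantization cells. I would handle this with a covering/net argument over the union of $\binom{N}{K}$ subspaces defining $\bar\Sigma^N_K$: on each $K$-dimensional subspace the unit ball admits an $\eta$-net of cardinality $(C/\eta)^{2K}$ (the factor $2K$ from the real dimension of a complex $K$-dimensional space), giving a total net cardinality roughly $\binom{N}{K}(C/\eta)^{O(K)}$, whose logarithm is $O(K\ln(N/K) + K\ln(1/\eta))$. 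The continuity gap is bridged not through $\Amap$ directly but by exploiting that $\bs v \mapsto \bs\Phi\bs v$ is Lipschitz (controlled by RIP) and that the quantizer error term $\Amap(\bs w)-\bs\Phi\bs w$ is uniformly bounded by $\delta$, so the scalar product changes in a controlled way as $\bs v$ moves within a net cell; the dependence on $\bs w$ is the delicate part and is absorbed by choosing $\eta \sim \epsilon$ and noting the error contribution scales like $\delta$, which explains the $(1+\delta)$ factor in the conclusion. Taking a union bound over the net, setting $t \sim \epsilon$, and requiring the exponent $cM\epsilon^2$ to dominate $\log$(net size) $\sim K\ln(N/K)\ln(1+c/\epsilon^3)$ produces exactly the stated sample complexity $M \ge C\epsilon^{-2}K\ln(N/K)\ln(1+c/\epsilon^3)$ and the distortion level $\nu = 4\epsilon(1+\delta)$. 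The whole argument is essentially the complex $N\times 2$ adaptation of the quantized-embedding proof in \cite{XJ2018}, so I would model the net-and-concentration structure on that reference while tracking the extra constants introduced by the complex and matrix structure.
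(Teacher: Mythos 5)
Your outline correctly identifies the two easy ingredients (unbiasedness of the dithered quantizer and per-pair Hoeffding concentration), but the step where the actual difficulty lives --- uniformity over $\bs w$ despite the discontinuity of $\Amap$ --- is not proved, and the mechanism you sketch for it would fail. Moving $\bs v$ to a net point is indeed harmless (the quantity is linear in $\bs v$ and $\Amap(\bs w)-\bs\Phi\bs w$ is bounded by $O(\delta)$ per entry), but when $\bs w$ moves to a nearby net point the vector $\Amap(\bs w)$ can jump by $\delta$ in \emph{many} coordinates at once, and saying this is ``absorbed by choosing $\eta\sim\epsilon$'' is not an argument: one must show, with high probability over the dither, that the number (or total magnitude) of quantization-cell flips induced by an $\eta$-perturbation of $\bs w$ is small, which forces a much finer net scale (of order $\epsilon^2\delta$/$\epsilon^3$, which is exactly where the $\ln(1+\frac{c}{\epsilon^3})$ factor in the sample complexity comes from --- note also that your own net-size accounting gives $K\ln\frac NK + K\ln\frac1\eta$ with $\eta\sim\epsilon$, which does not reproduce the stated product-form bound). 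That flip-counting argument is precisely the content of \cite[Prop.~6.5]{XJ2018}; deferring to it ``for the net-and-concentration structure'' while re-deriving everything in the complex setting means the hardest lemma is neither proved nor correctly invoked in your sketch.

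For comparison, the paper avoids redoing any of this: it reduces the complex LPD to the real one by writing $\bs\Phi\bs u=\bar{\bs\Phi}\bs u^{\rm R}+\im\,\bar{\bs\Phi}\bs u^{\rm I}$ and $\Amap(\bs u)=\bar\Amap(\bs u^{\rm R})+\im\,\bar\Amap(\bs u^{\rm I})$ with $\bar{\bs\Phi}=(\bs\Phi_{\rm R},\bs\Phi_{\rm I})\in\bb R^{M\times 2N}$ and $\bs u^{\rm R},\bs u^{\rm I}\in\Sigma^{2N}_{2K}$, so that the complex inner products split into four real ones (this is where the factor $4$ in $\nu=4\epsilon(1+\delta)$ comes from, which your direct approach does not account for). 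It then applies \cite[Prop.~6.5]{XJ2018} as a black box to $\bar\Amap$, whose only hypothesis --- that $\sinv{\sqrt M}\bar{\bs\Phi}$ is Lipschitz on $\Sigma^{2N}_{2K}$ --- is verified directly from the assumed RIP of $\sinv{\sqrt M}\bs\Phi$. If you want a complete proof along your lines, you must either carry out the flip-counting/net argument of \cite{XJ2018} in full in the complex setting, or (much shorter) perform the complex-to-real reduction so that the real-case result applies verbatim.
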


\noindent In other words, inverting the role of $M$ and $\epsilon$ in the requirement of Prop.~\ref{prop:lpd-existence} and assuming $\sinv{\sqrt M}\bs \Phi$ is RIP, up to some log factors, Prop.~\ref{prop:PBP-guarantees} shows that, \whp, the estimation error of PBP decays like $\|\hat{\bs X} - \bs X\|_F = O\big((1+\delta)\sqrt{{K}/{M}}\big)$ when $M$ increases.

\section{Numerical Results}
\label{sec:num-res}

In this section, Monte Carlo simulations are performed for different sparsity level to assess the accuracy of the proposed scheme for a variety of targets' positions.

\vspace{2mm}
\myp{A. Parameters and metrics:}  We simulate the working mode of a noiseless K-Band radar, \ie giving $f_0 = 24{\rm Ghz}$ and a bandwidth of $B=250 {\rm Mhz}$. The spacing $d$ between the two antennas is defined as half a wavelength, \ie $d=\frac{c}{2 f_0}$, allowing for angular estimation in $[-\frac{\pi}{2}, \frac{\pi}{2}]$. In all our simulations, we set the number of ranges $N$ to $256$, giving a range limit of $R_{\max}=153.6$m and a range resolution of $0.6$m. We test $320\,000 \times K$ Monte Carlo runs, where $K$ is the considered sparsity, the targets' localization are picked uniformly at random in a $40\times 40$ discretized polar domain $(R, \theta) \in [0,R_{\max}] \times [-\pi/2, \pi/2]$. In this domain, all targets receive uniformly random phases in $[0, 2\pi]$, the strongest target being set to a unit amplitude and the weaker ones having uniformly distributed amplitudes in $[0, 1]$. In order to focus on bit-rate reduction in radar processing, a total budget of 512 bits per channel is fixed for each acquisition with $M$ measurements, \ie giving $M=512$ measurements for 1-bit measurement quantization (\ie 2 complete FMCW saw-tooths), or $M=16$ for 32-bit measurements. Our regime thus leads to a bit-rate reduction of $93.75\%$ compared to a full acquisition with $M=256$ for 32-bit measurements. The quality of the position estimation is simply measured as $\min_k |R e^{\im \theta}-\hat R_k e^{\im \hat \theta_k}|$, \ie the distance between the true target location and the closest estimated targets in $(\hat{ \bs R},\hat{ \bs \theta})$. This quality measure is then averaged over runs which have the same position $(R,\theta)$. These results are reported in a 2D polar graph (Fig.~\ref{K1}, Fig.~\ref{fig:K2_all}, Fig.~\ref{fig:K2_sep}). 
\begin{figure}[!t] 
    \centering
    \begin{subfigure}[b]{0.25\textwidth}
        \includegraphics[height=2.8cm]{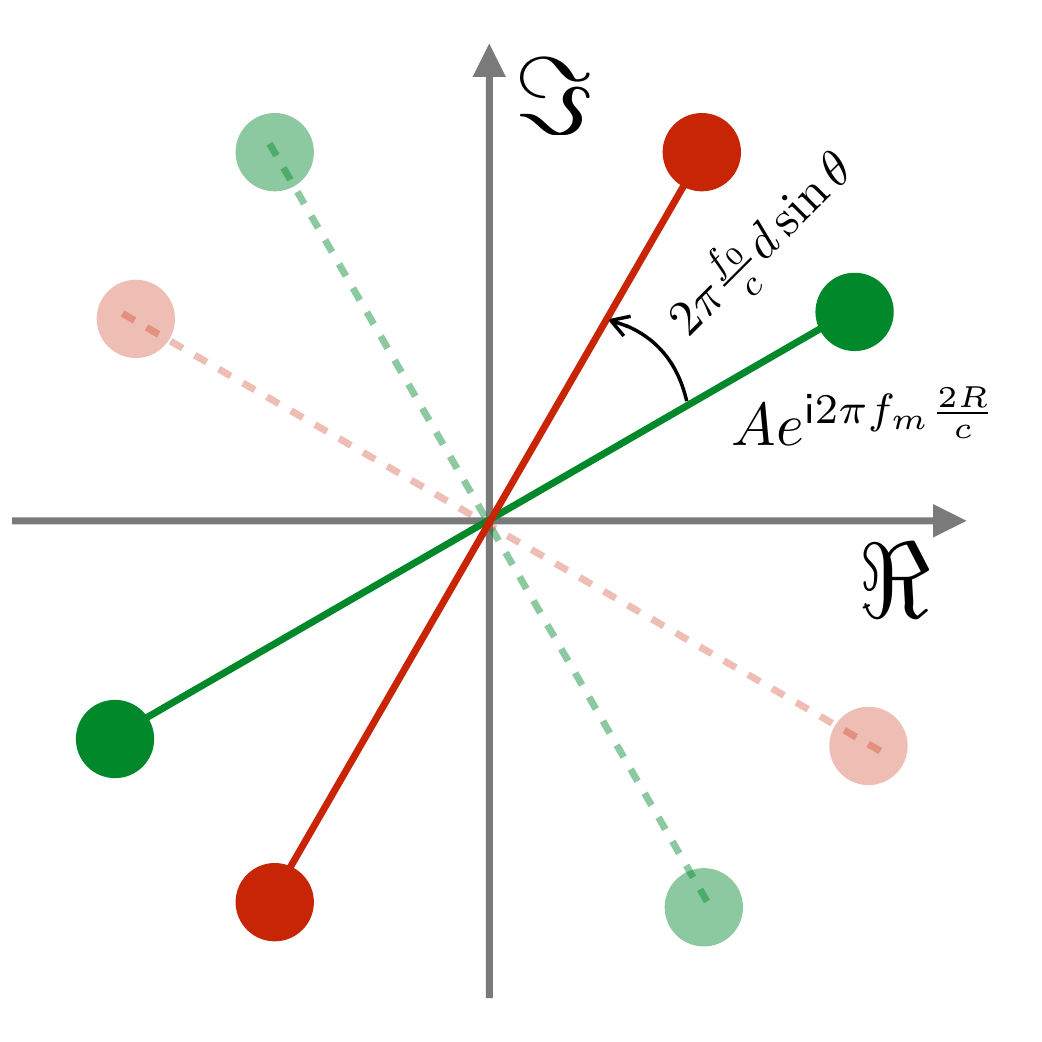}
        \caption{}
        \label{fig:CE}
    \end{subfigure}
    \begin{subfigure}[b]{0.20\textwidth}
        \includegraphics[height=2.8cm]{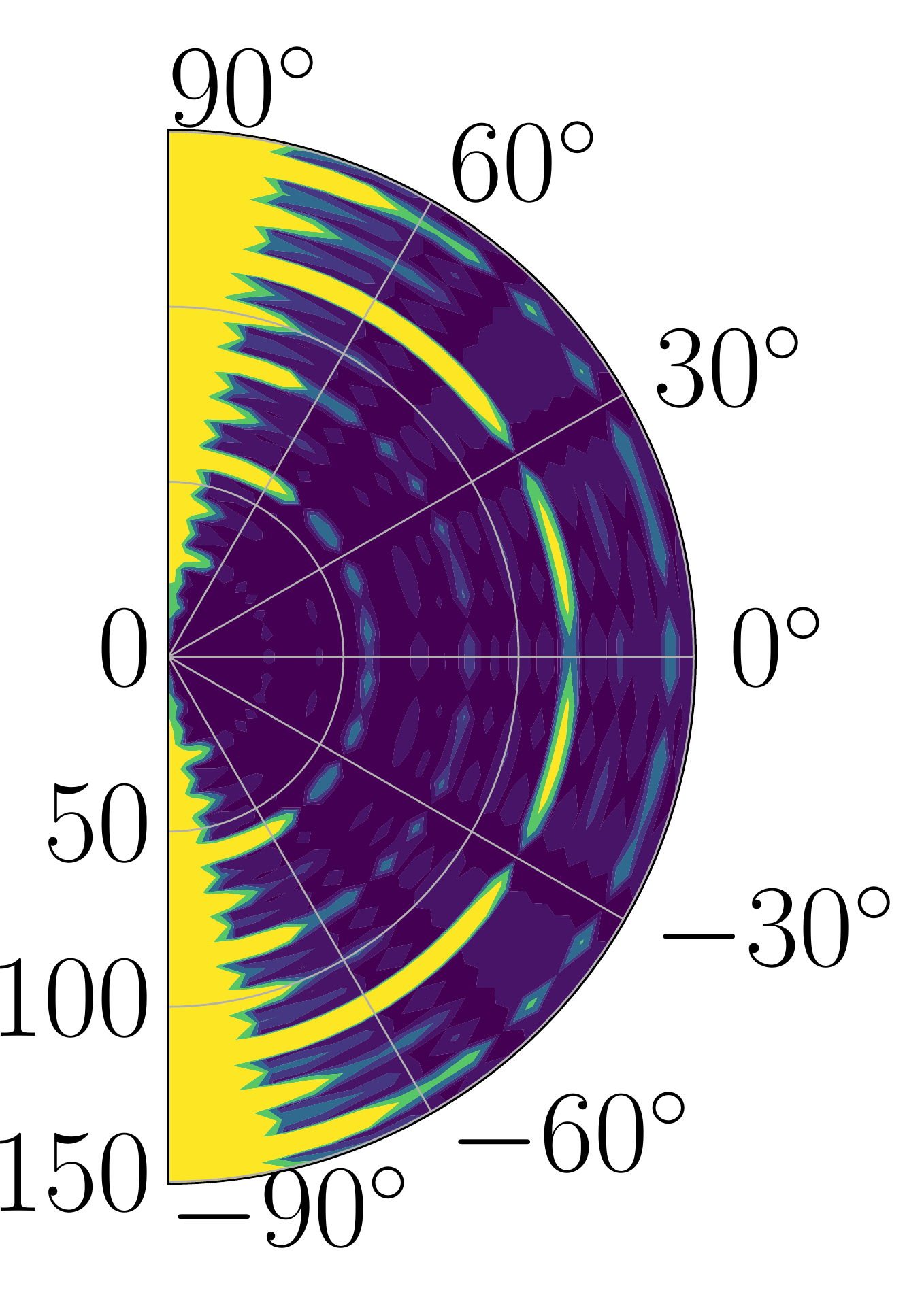}
        \caption{}
        \label{fig:1ND}
    \end{subfigure}
    \begin{subfigure}[b]{0.2\textwidth}
        \includegraphics[height=2.8cm]{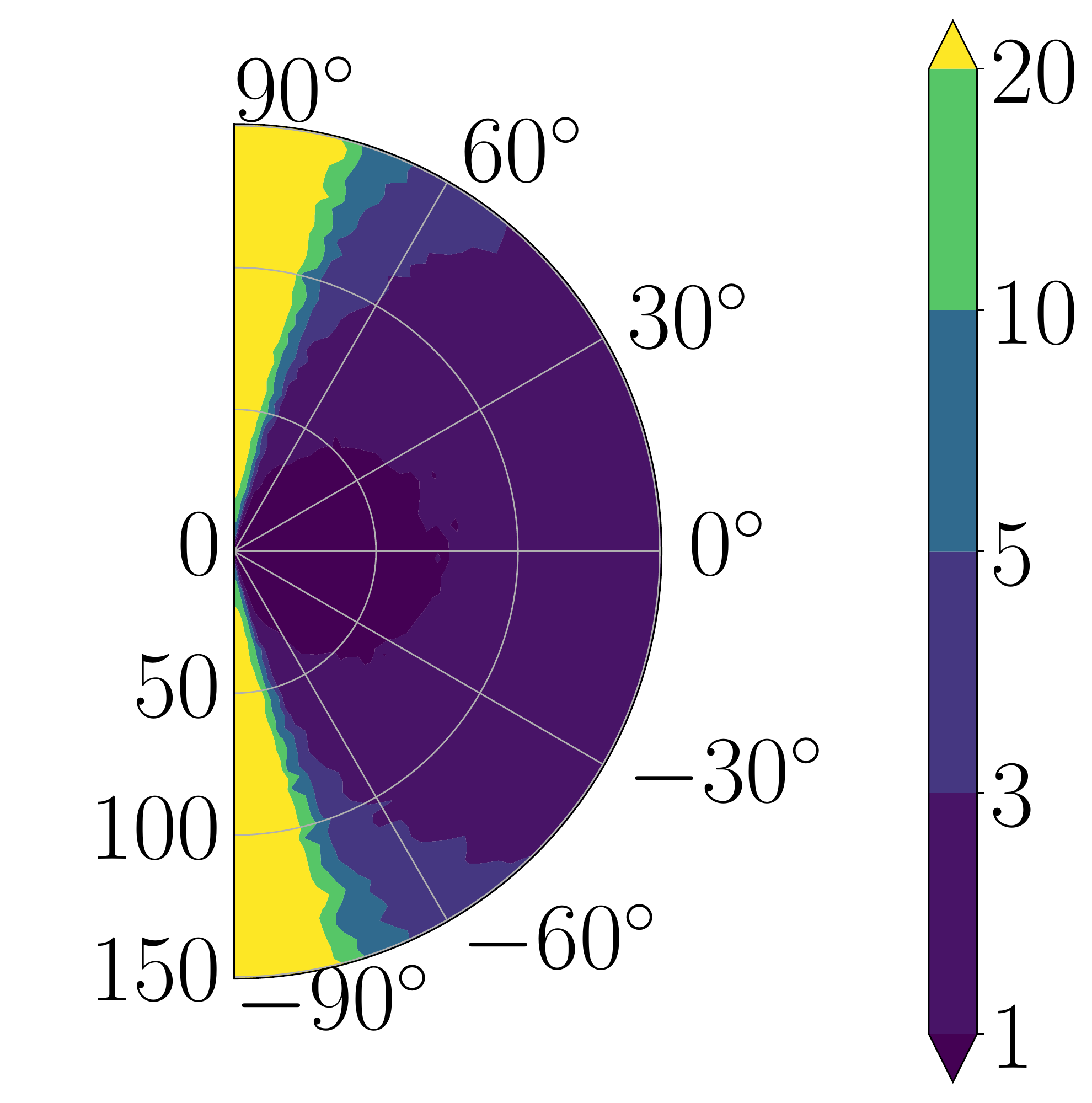}
        \caption{}
        \label{fig:1D}
    \end{subfigure}\vspace{-4mm}
      \caption{\ninept (a) Example of a possible 2D target localization ambiguity. (b) and (c), positions error in meters for Monte Carlo simulations with one target and $M=512$, for 1-bit non-dithered and 1-bit dithered quantization scheme, respectively.\sq}\label{fig:K1_all}
    \label{K1}
\end{figure}

\medskip
\myp{B. Simulations for a single target scenario:} In this first simulation we test the 2D-localization of a single target (\ie $\bs X \in \Theta^1 $) for the dithered and non-dithered scheme. 
In Fig.~\ref{fig:1ND}, the non-dithered scheme exhibits systematic artifacts in the estimation quality. Indeed, in the context of radar localization of one target, ambiguities appear when full resolution signals from different receiving antennas once quantized are the same. 
One obvious possibility is when the angle of arrival $\theta$ is so small that $ \bs G \approx {\rm \bf Id}$ which means that $\cl Q( \bs \Phi \bs x) = \cl Q( \bs \Phi  \bs G \bs x)$. Another possibility appears for targets with large angles of arrival at certain ranges, as seen in the artifact pattern in Fig.~\ref{fig:1ND}. Certain ranges induce a strong repetition between quantized measurements, as depicted for illustration in Fig.~\ref{fig:CE} for a range of $\tinv{4}R_{\max}$. The quantized signals being identical, the estimated angle is $0^{\circ}$ regardless of the actual angle. 

The dithered scheme  in  Fig.~\ref{fig:1D} exhibits good performances on a wide range of positions and the effect of the dithering is clearly visible by the absence of artifacts. The drop of performances in Fig. \ref{fig:1ND} and Fig.~\ref{fig:1D} at $\pm 72^{\circ}$ degrees is related to the sensitivity of the $\arcsin$ function.

\vspace{2mm}
\myp{C. Simulations for the 2 targets scenario:} Fig.~\ref{fig:K2_all} shows the performances of the schemes for 2 targets. When compared to the non-dithered  (Fig.~\ref{fig:K2ND}) or full resolution (Fig.~\ref{fig:K232}), the dithered strategy in Fig.~\ref{fig:K2D} surpasses the others constrained to the same bit rate.  
 \begin{figure}[!t]
    \centering
    \begin{subfigure}[b]{0.20\textwidth}
        \includegraphics[height=2.8cm]{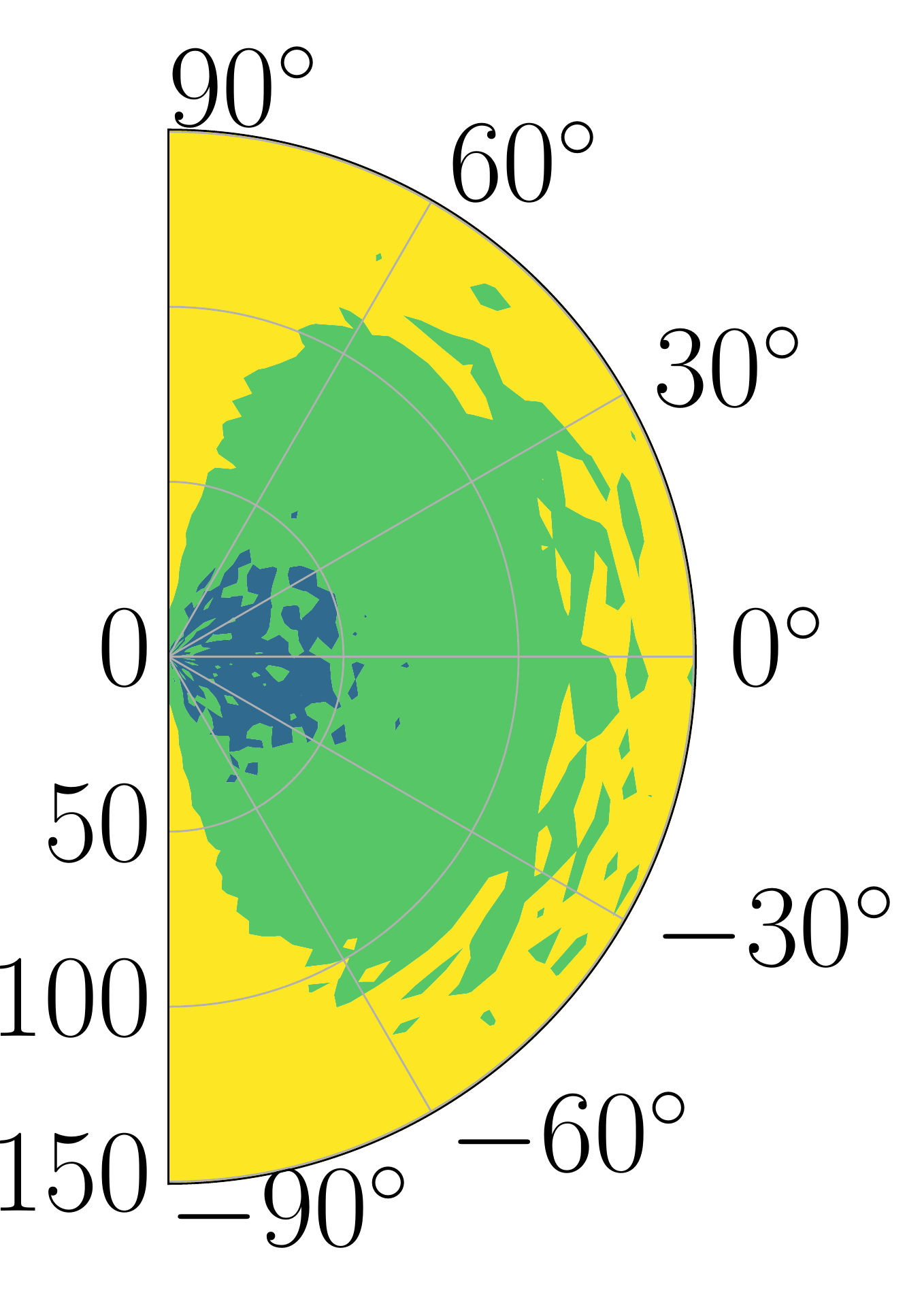}
        \caption{}
        \label{fig:K2ND}
    \end{subfigure}
    \begin{subfigure}[b]{0.20\textwidth}
        \includegraphics[height=2.8cm]{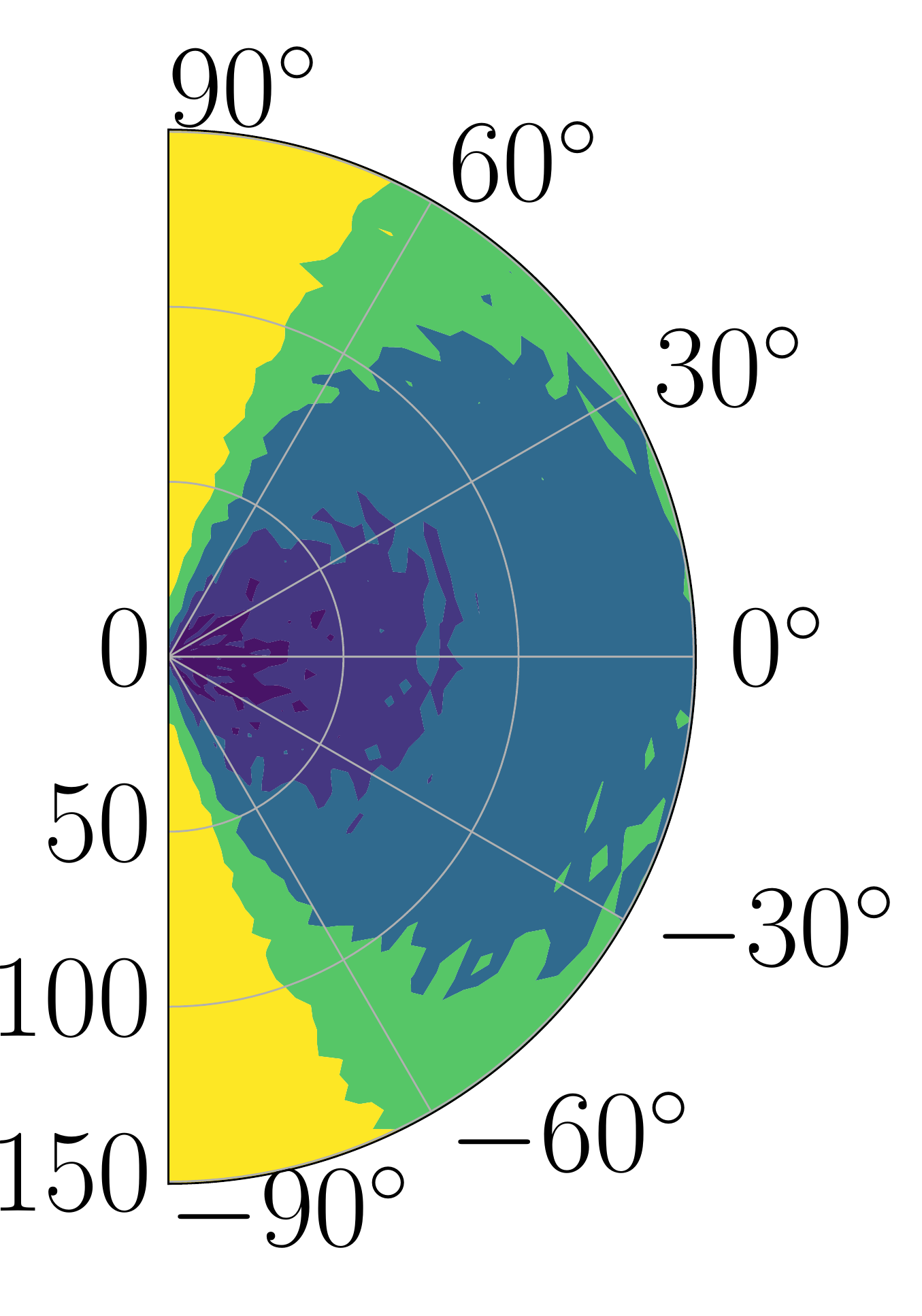}
        \caption{}
         \label{fig:K2D}
    \end{subfigure}
    \begin{subfigure}[b]{0.20\textwidth}
        \includegraphics[height=2.8cm]{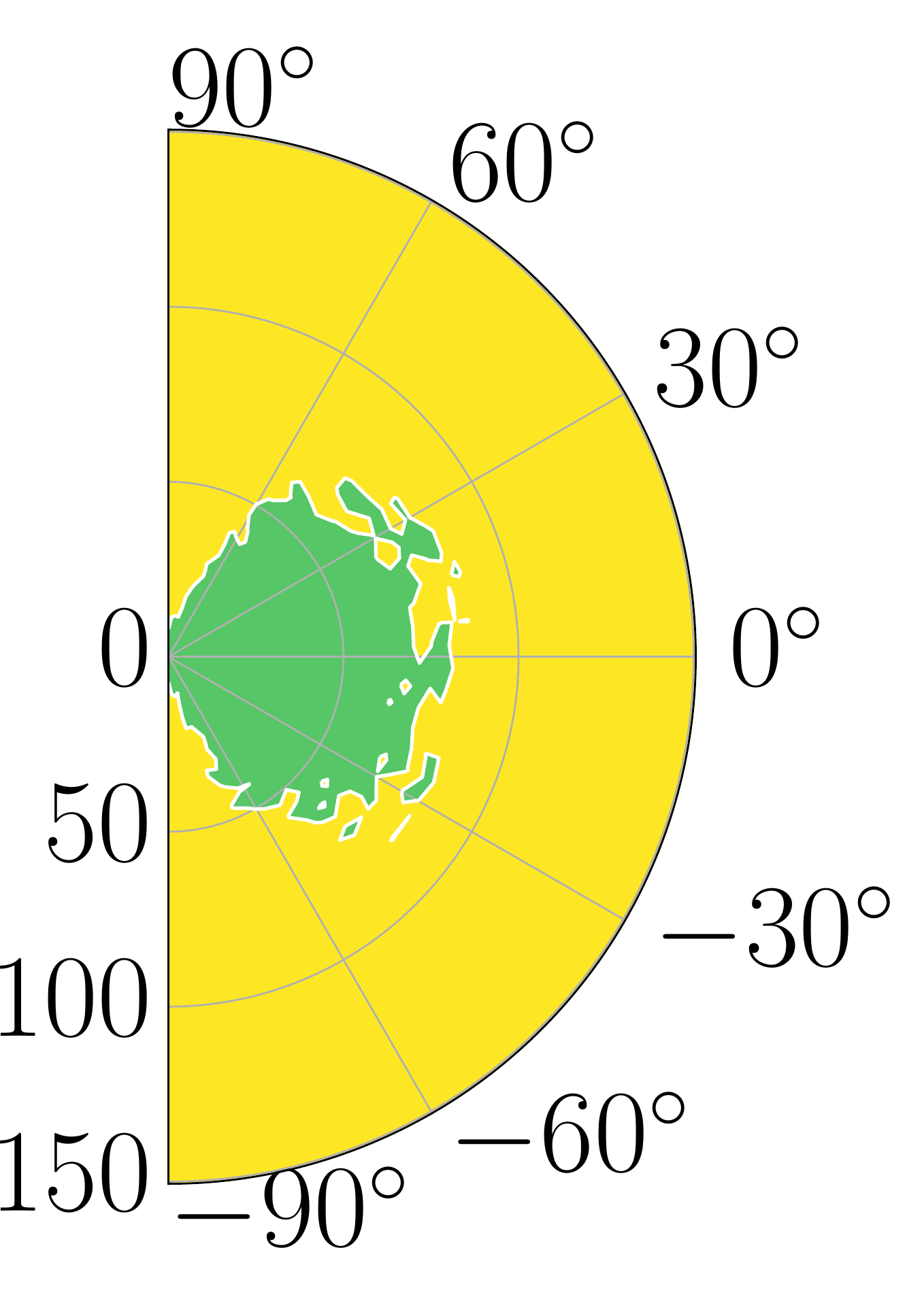}
        \caption{}
        \label{fig:K232}
    \end{subfigure}
    \begin{subfigure}[b]{0.2\textwidth}
        \includegraphics[height=2.8cm]{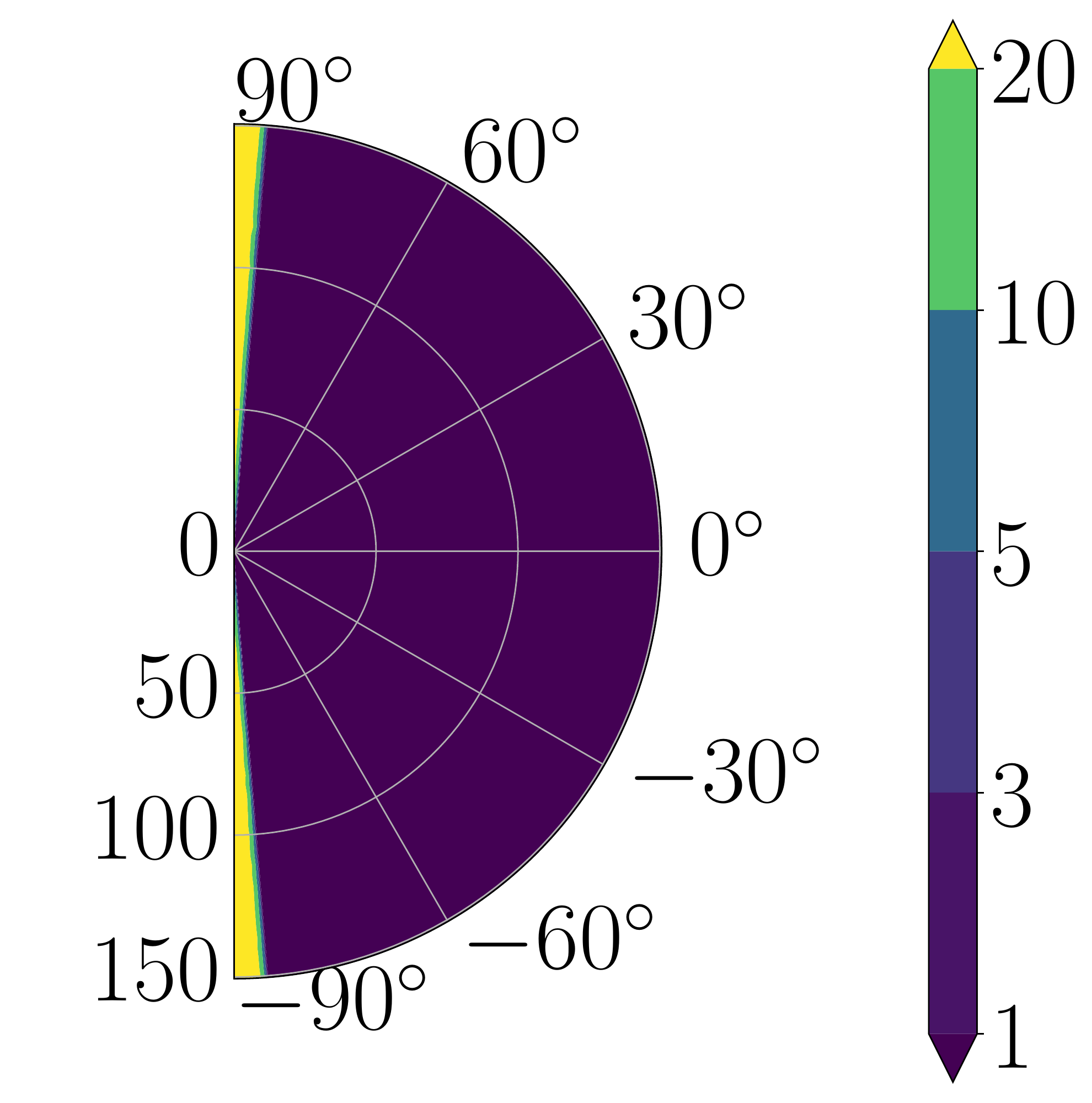}
        \caption{}
        \label{fig:K232F}
    \end{subfigure}\vspace{-4mm}
    \caption{\ninept Positions error in meters for Monte Carlo simulations with two targets; (a) $M=512$, 1-bit non-dithered quantization; (b) $M=512$, 1-bit dithered quantization; (c) $M=16$, 32-bit non-dithered; (d) $M=256$, 32-bit full measurements.\sq\sq}\label{fig:K2_all}
\end{figure}
Comparing Fig.~\ref{fig:1D} and Fig.~\ref{fig:K2D}, a drop of performances can be seen from the increase in sparsity. This is consistent with the results in Sec.~\ref{sec: QCS-model}, where we showed that the bound on the error of PBP grows as the sparsity increases. Moreover, in the absence of dithering in our quantized radar scheme, extremely sparse signals can lead to ambiguous estimations. The complete formulations and study of these situations are the subject of a future publication. In Fig.~\ref{fig:K2_sep}, the strongest and weakest estimated targets are separated to study their respective accuracy. For the non-dithered scheme in Fig.~\ref{fig:11ND} the strongest target still exhibits artifacts whereas the weakest (Fig.~\ref{fig:12ND}) is consistently wrongly estimated. The dithering reduces partly these situations and offers better performances for both targets in Fig.~\ref{fig:11D} and Fig.~\ref{fig:12D}.
While the accuracy of the second target for the dithered scheme is impressive when compared to the non-dithered one, it is far from what can be achieved in Fig.~\ref{fig:K232F} for the full measurements and resolution approach. This paper is one of the first venture into radar localization using 1-bit dithered scheme that is, furthermore, constrained to a specifically low bit-rate. In the future, better reconstruction qualities could be obtained by replacing PBP with other algorithms explicitly using the dithering to reach consistent signal estimates~\cite{DJR2017,BFNPW2017}.

\begin{figure}[!t]
    \centering
    \begin{subfigure}[b]{0.20\textwidth}
        \includegraphics[height=2.8cm]{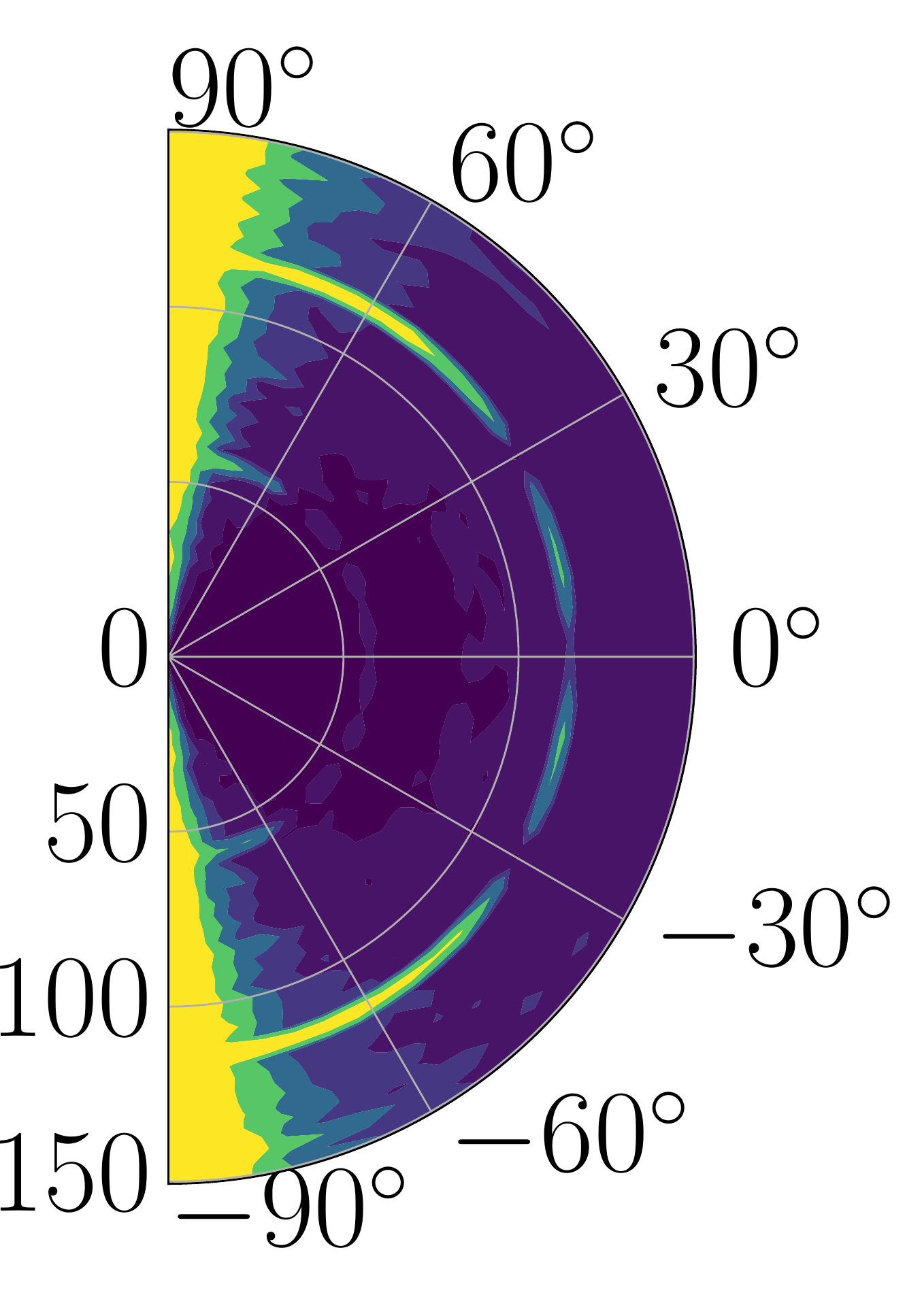}
        \caption{}
        \label{fig:11ND}
    \end{subfigure}
    \begin{subfigure}[b]{0.20\textwidth}
        \includegraphics[height=2.8cm]{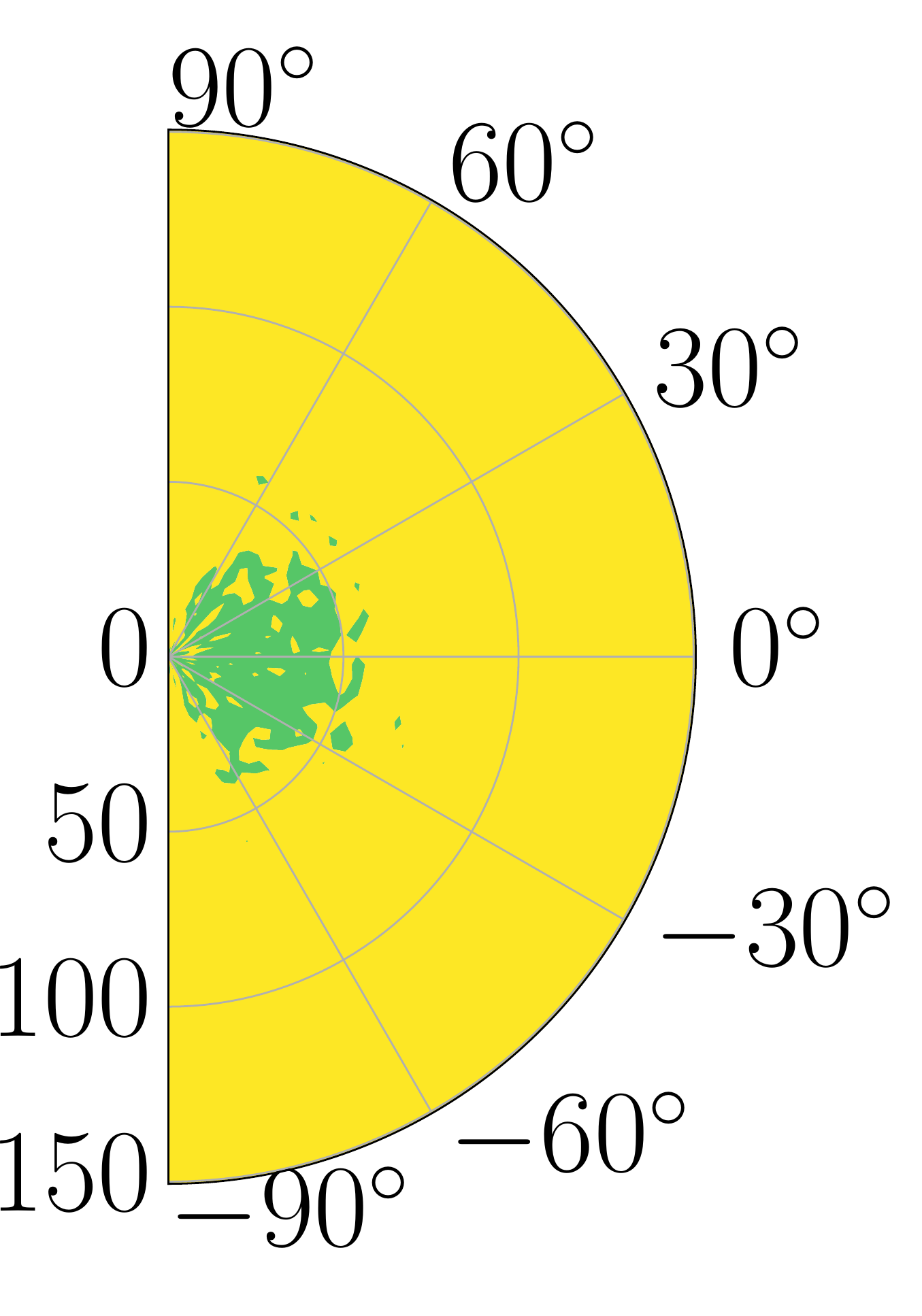}
        \caption{}
        \label{fig:12ND}
    \end{subfigure}
    \begin{subfigure}[b]{0.20\textwidth}
        \includegraphics[height=2.8cm]{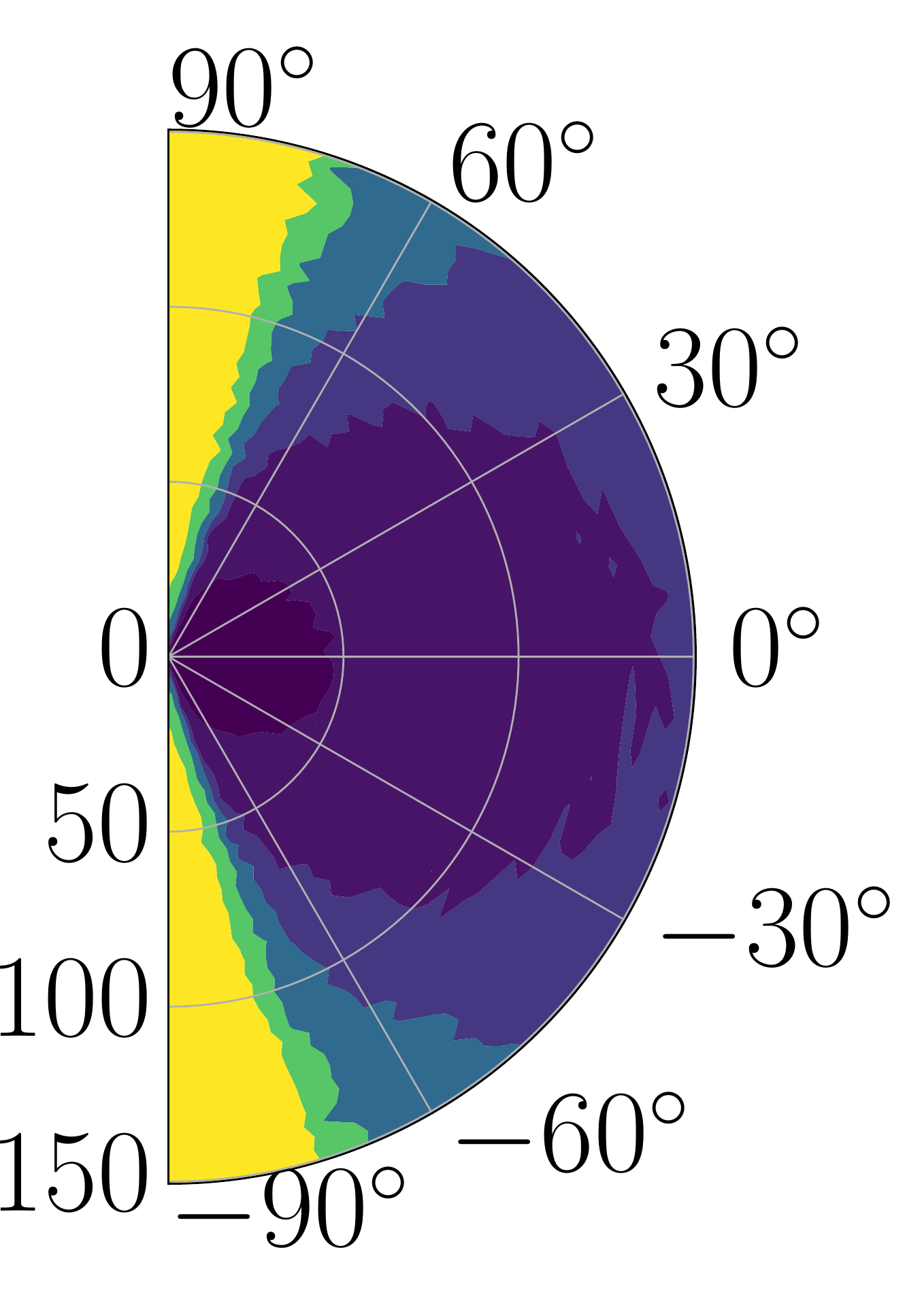}
        \caption{}
        \label{fig:11D}
    \end{subfigure}
    \begin{subfigure}[b]{0.2\textwidth}
        \includegraphics[height=2.8cm]{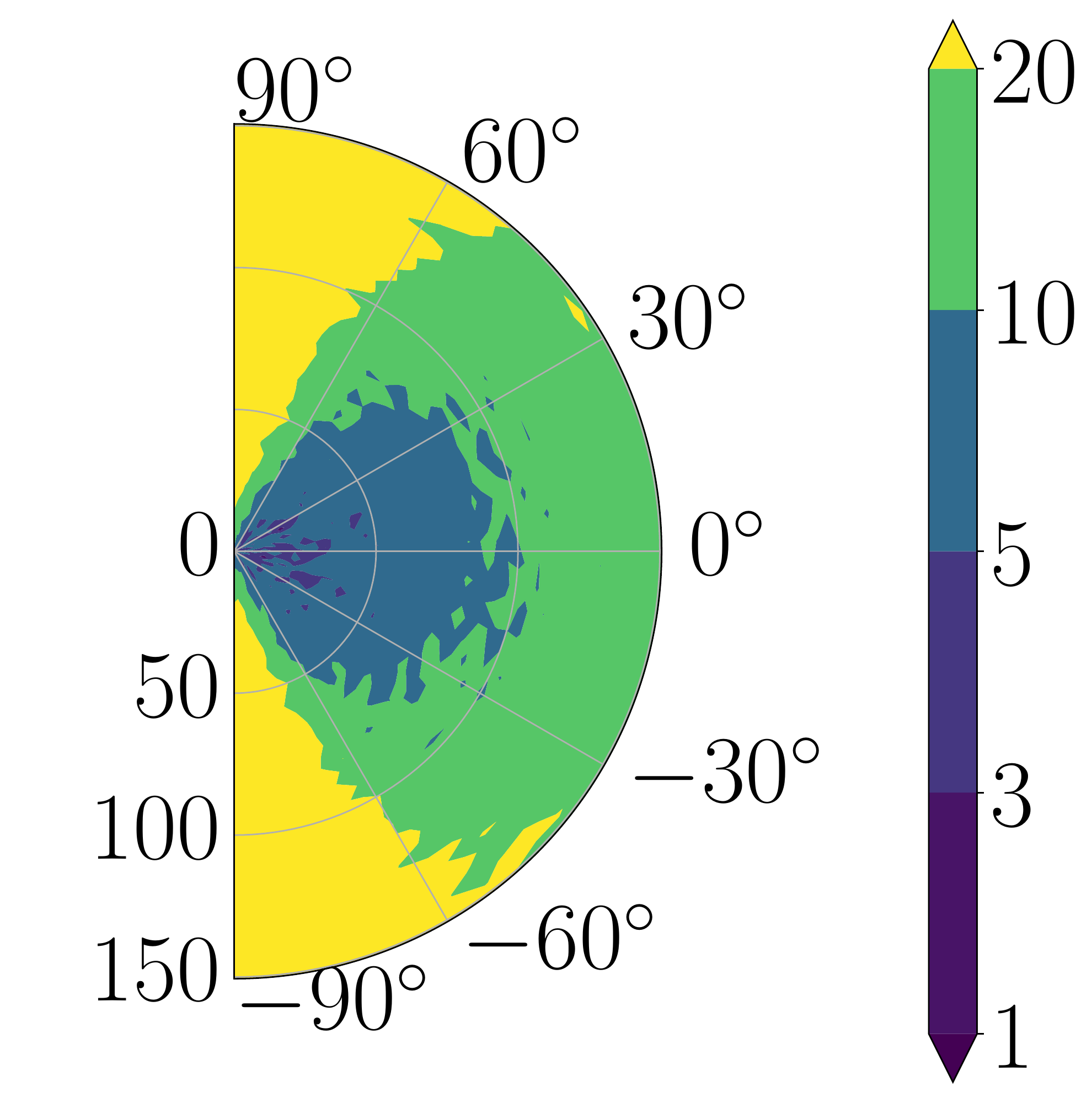}
        \caption{}
        \label{fig:12D}
    \end{subfigure}\vspace{-4mm}
  \caption{\ninept Positions error in meters for Monte Carlo simulations with two targets, $M=512$ and 1-bit quantization; in (a) and (b), strongest and weakest target for the non-dithered scheme, respectively; in (c) and (d), strongest and weakest target for the dithered scheme, respectively.\sq}\label{fig:K2_sep}
\end{figure}
\hsq
\section{Real Measurements}
\label{sec:real-experiment}

The study of the proposed scheme is now extended to real measurements to test the model and reconstruction algorithm against noise and nonidealities from the environment, the targets and the radar. 
The measurements are performed in an anechoic chamber where two targets are located in front of a commercial radar product \cite{KMD2} at different ranges and angles. The radar parameters, \eg $B$ and $N$, mirror the ones used in the simulations. The proposed scheme and the developed theory only considers the noiseless case. Practical measurements with real radar sensors are, however, inherently corrupted by noise. To this end, the reconstruction is studied with different levels of added dithering to assess the impact of the already present noise on the quantization. Fig.~\ref{fig:alpha} shows the mean position errors for the two targets for a weighted dithering $\tilde{\bs \xi}=\alpha \bs \xi$, with $\alpha \in [0,1]$, where $\bs \xi$ is the dithering defined in Sec.~\ref{sec: QCS-model}. Fig.~\ref{fig:alpha} shows that a certain amount of dithering is required to achieve good performance but also that adding a full dithering (\ie $\alpha=1$) is not the \textit{by default } optimum.
\begin{figure}[!t]
    \centering
    \begin{subfigure}[b]{0.4\textwidth}
        \includegraphics[width=\textwidth]{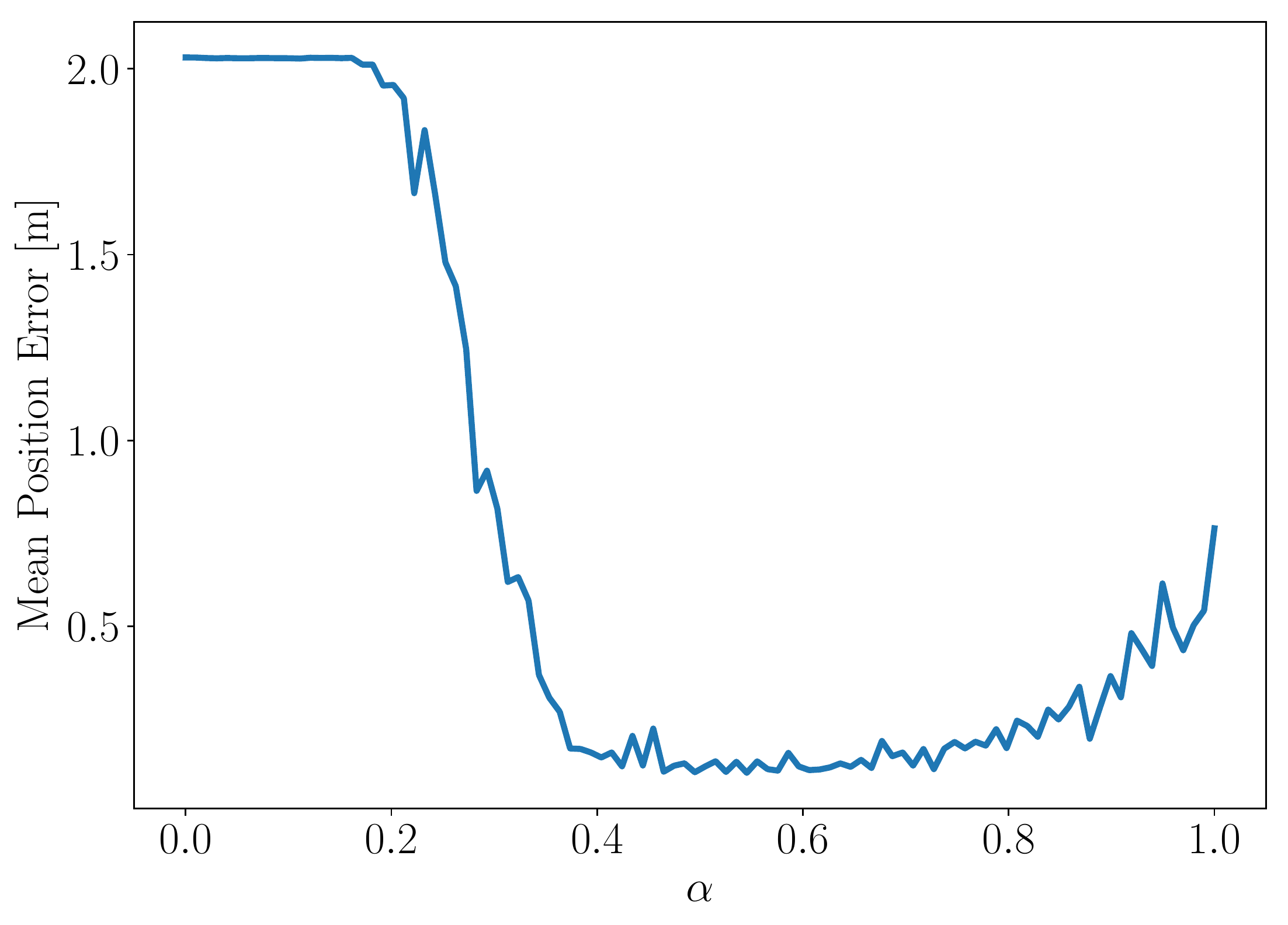}
        \caption{\ninept Mean positions error for different levels of dithering.}
        \label{fig:alpha}
    \end{subfigure}
    \hspace{2mm}
    \begin{subfigure}[b]{0.4\textwidth}
        \includegraphics[width=\textwidth]{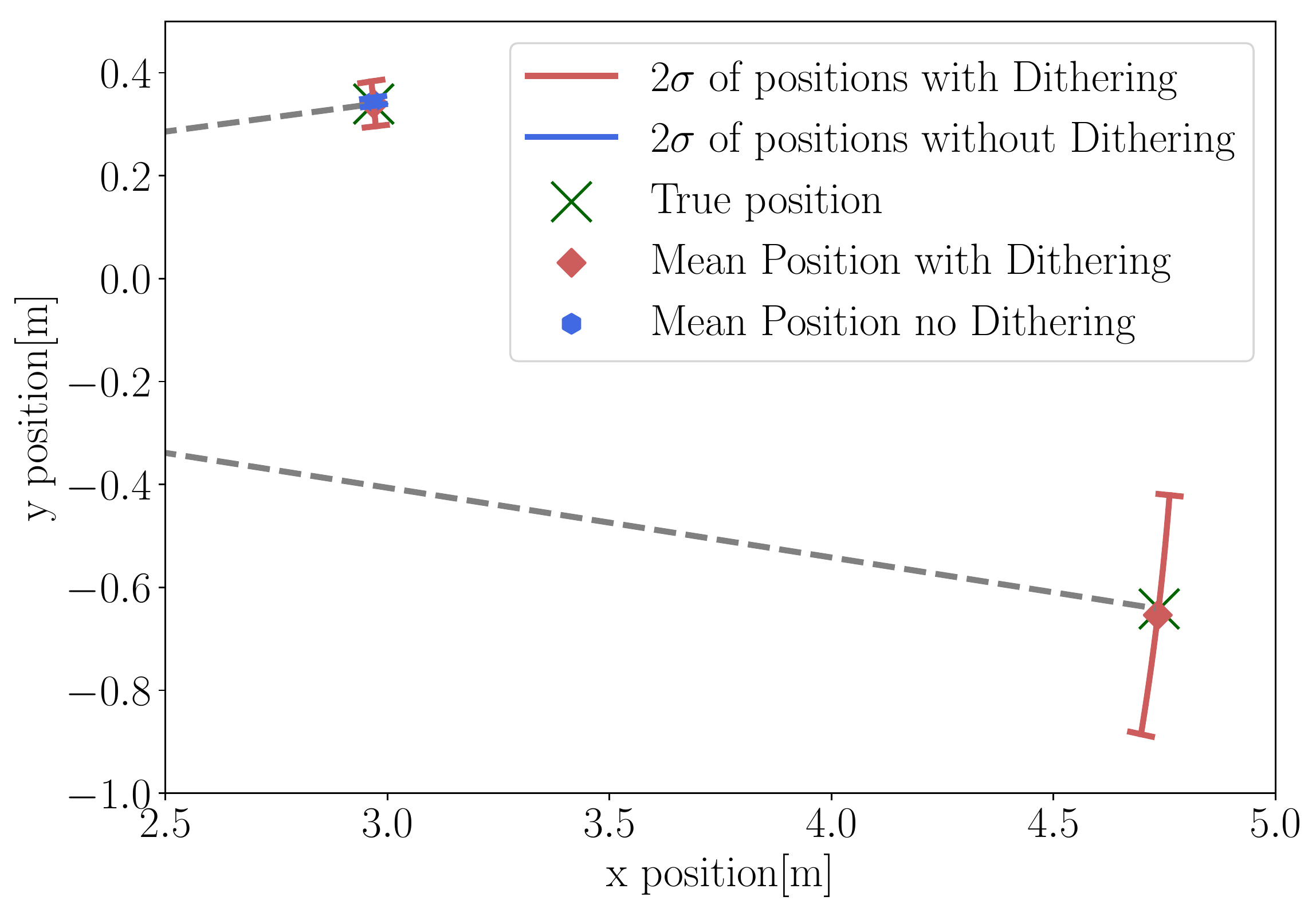}
        \caption{\ninept Reconstructions achieved with weighted dithering.}
        \label{fig:meas_loc}
    \end{subfigure}\vspace{-4mm}
    \caption{\ninept Reconstruction of real measurements.\sq}\label{fig:animals}
    \label{meas}
\end{figure}
Fig.~\ref{fig:meas_loc} shows the reconstruction achieved using the optimal scaling $\alpha=0.55$ of the dithering versus the absence of dithering ($\alpha=0$). The radar is located in $(0,0)$. The variations in the estimations for 106 consecutive measurements are represented by the two sigma span around the mean estimated positions. As already hinted in Fig.~\ref{fig:alpha}, the non-dithered scheme exhibits poor performances as it is not able to resolve the second target at the $4.8$m range. The 1-bit non-dithered quantization has effectively removed the second target from the signal. Adding the weighted dithering allows the recovery of the two targets consistently but at a price in the variance of the closest one. This result shows nonetheless a promising gain of the use of dithering on real applications where noise is encountered.

\hsq
\section{Conclusion}

In this work, we have studied the 2D-localization of multiple targets configurations by using two receiving antennas combined with 1-bit radar quantization, which resulted into the QCS model \eqref{eq:QRadar-system}. We proved that the PBP algorithm for the 2D-localization of targets achieves a bounded reconstruction error decaying as the number of measurements increases. This decaying reconstruction error of the PBP algorithm was further verified using the Monte Carlo simulations and real radar measurements. In particular, the real radar measurements experiments with the radar sensor shed light on the interaction between the system noise and the uniform dithering. Furthermore, we showed how some deterministic artifacts vanishes when a random dithering vector is added in the quantization process. In future works, a deeper study of the range profile estimation using quantized dithered 1-bit radar measurements will provide further insights on how to improve the performances of real radar applications.

\appendix

\hsq
\section{Proof of Prop.~\ref{prop:lpd-existence}}

Extending the LPD to real mappings, we first note that $\bar\Amap \in {\rm LPD}(\Sigma^{2N}_{2K}, \bar{\bs \Phi}, \nu)$ involves $\Amap \in {\rm LPD}(\bar{\Sigma}^N_K, \bs \Phi, 4\nu)$,
with ${\bar\Amap(\bs u) := \cl Q(\bar{\bs \Phi} \bs u + \bs \xi)}$, $\bar{\bs \Phi} := (\bs \Phi_{\rm R}, \bs \Phi_{\rm I}) \in \bb R^{M\times 2N}$ and $\Sigma^{2N}_{2K} := \bar{\Sigma}^{2N}_{2K} \cap \bb R^{2N}$. Indeed, for all $\bs u \in \bar{\Sigma}^N_K$, defining $\bs u^{\rm R} := (\bs u^\top_{\rm R}, - \bs u^\top_{\rm I})^\top \in \Sigma^{2N}_{2K}$ and $\bs u^{\rm I} := (\bs u^\top_{\rm I}, \bs u^\top_{\rm R})^\top \in \Sigma^{2N}_{2K}$, we have $\bs \Phi \bs u = \bar{\bs \Phi} \bs u^{\rm R} + \im \bar{\bs \Phi} \bs u^{\rm I}$ and $\Amap(\bs u) = \bar\Amap(\bs u^{\rm R}) + \im \bar\Amap(\bs u^{\rm I})$.  Therefore, if $\bar\Amap \in {\rm LPD}(\Sigma^{2N}_{2K}, \bar{\bs \Phi}, \nu)$, $|\scp{\Amap(\bs w)}{\bs \Phi \bs v} - \scp{\bs \Phi \bs w}{\bs \Phi \bs v}| \leq \sum_{r,s \in \{``{\rm R}", ``{\rm I}"\}} |\scp{\bar\Amap(\bs w^r)}{\bar{\bs \Phi} \bs v^s} - \scp{\bar{\bs \Phi} \bs w^r}{\bar{\bs \Phi} \bs v^s}| \leq 4 \nu M$.

Interestingly, provided $\sinv{\sqrt M}\|\bar{\bs \Phi}(\bs u - \bs v)\| \leq L \eta$ as soon as $\|\bs u - \bs v\|\leq \eta$ for any $\eta >0$, $L=O(1)$ and $\bs u,\bs v \in \Sigma^{2N}_{2K}$ (\ie if $\frac{1}{\sqrt M}\bar{\bs \Phi}$ is $(\eta,L)$-Lipschitz over $\Sigma^{2N}_{2K}$), \cite[Prop. 6.5]{XJ2018} proves that \whp ${\bar \Amap \in {\rm LPD}\big(\Sigma^{2N}_{2K}, \bar{\bs \Phi}, \nu = \epsilon (1+\delta)\big)}$ provided $M \geq C \epsilon^{-2} K \ln(\frac{N}{K})\ln(1 + \frac{c}{\epsilon^3})$ for some constants $C,c>0$. However, for all $\bs u := (\bs u_1^\top, \bs u_2^\top)^\top ,\bs v := (\bs v_1^\top, \bs v_2^\top)^\top \in \Sigma^{2N}_{2K}$, if $\sinv{\sqrt M}\bs \Phi \in {\rm RIP}(\bar{\Sigma}^N_{2K},\epsilon)$, then $\tinv{2}\|\bar{\bs \Phi}(\bs u - \bs v)\|^2 \leq \|\bs \Phi_{\rm R} (\bs u_1 - \bs v_1)\|^2 + \|\bs \Phi_{\rm I} (\bs u_2 - \bs v_2)\|^2 = \|\bs \Phi (\bs u_1 - \bs v_1)\|^2 + \|\bs \Phi (\bs u_2 - \bs v_2)\|^2 \leq M\, (1+\epsilon) \|\bs u - \bs v\|^2$ since $\bs u_i, \bs v_i \in \Sigma^N_{2K} \subset \bar{\Sigma}^N_{2K}$, which shows that $\sinv{\sqrt M}\bar{\bs \Phi}$ is $(\eta, 4)$-Lipschitz over $\Sigma^{2N}_{2K}$ for any $\eta >0$.  This concludes the proof.


\footnotesize

\end{document}